\documentclass[a4paper,reqno]{article}

\usepackage{tgschola} 
\usepackage{latexsym}
\usepackage[english]{babel}
\usepackage{fancyhdr}
\usepackage[mathscr]{eucal}
\usepackage{amsmath}
\usepackage{mathrsfs}
\usepackage{amsthm}
\usepackage[left=2.5cm, right=2.5cm, top=2.5cm, bottom=2cm]{geometry}
\usepackage{amsfonts}
\usepackage{amssymb}
\usepackage{amscd}
\usepackage{bbm}
\usepackage{graphicx}
\usepackage{graphics}
\usepackage{latexsym}
\usepackage{color}
\usepackage{calc}
\usepackage{cancel}
\usepackage{marvosym}
\usepackage{tikz}           
\usetikzlibrary{arrows.meta, positioning}
\usepackage[normalem]{ulem}
\usepackage{bbold}

\newcommand{\ud}{\mathrm{d}}

\newcommand{\ii}{\mathrm{i}}

\definecolor{azzurro}{rgb}{0.19, 0.55, 0.91}
\definecolor{giallino}{rgb}{0.97,	0.78,	0.05}
\definecolor{verdino}{rgb}{0.49, 0.74, 0.54}

\definecolor{verdissimo}{rgb}{0.0, 0.5, 0.0}

\theoremstyle{plain}
\newtheorem{theorem}{Theorem}[section]
\newtheorem{lemma}[theorem]{Lemma}

\usepackage{tocloft}

\renewenvironment{thebibliography}[1]{
  \begin{oldthebibliography}{#1}
    \setlength{\itemsep}{0.1em}
    \setlength{\parskip}{0em}
}
{
  \end{oldthebibliography}
}

\theoremstyle{definition}

\newtheorem{remark}[theorem]{Remark}
\newtheorem*{remark*}{Remark}

\newcommand{\OOO}{\mathcal{O}}

\numberwithin{equation}{section}

\addtocontents{toc}{\protect\setcounter{tocdepth}{2}}

\begin{document}\hyphenation{di-men-sion-al}

\title{\textbf{On Exponentially Long Prethermalization Timescales in Isolated Quantum Systems}}

\author{Matteo Gallone\footnote{\emph{Email:} \texttt{matteo.gallone@unimi.it}} 
\\ \vspace{-10pt} \\
\small \textit{\textsuperscript{*}Dipartimento di Matematica ``F.~Enriques'', Universit\`a degli Studi di Milano, via Saldini 50 -- 20133 Milano (Italy) }\\ \vspace{-18pt} \\ \small %
}

\date{\today}

\maketitle

\vspace{-10pt}

\begin{abstract}
We study prethermalization in time-independent quantum many-body systems on a $d$-dimensional lattice with an extensive local Hamiltonian $H=N+\varepsilon P$, in the regime where $\varepsilon \ll 1$. {We prove that the prethermal timescale is exponential in $\varepsilon_0/\varepsilon$, where $\varepsilon_0$ is an explicit scale determined by the locality exponent and the local norm of $P$. We also construct two extensive local quasi-conserved quantities whose variation remains exponentially small throughout this time
interval.}
\end{abstract}

\tableofcontents

\section{Introduction and main results}
The approach to thermal equilibrium is a central problem in statistical mechanics. Originally formulated by Boltzmann~\cite{boltzmann1872}, it has attracted renewed attention following the {unexpected outcome} of the numerical experiments by Fermi, Pasta, Ulam, and Tsingou~\cite{FPU-Original}, as well as the experimental realization of the ``quantum Newton's cradle'' by Kinoshita, Wenger, and Weiss~\cite{Kinoshita2006-wd}.

It is by now well understood that the non-equilibrium dynamics of quantum many-body systems can exhibit rich and complex behavior. In particular, a wide class of systems displays \emph{prethermalization}: after a rapid initial evolution, the system relaxes to a long-lived quasi-stationary state, referred to as a \emph{prethermal state}, before eventually reaching thermal equilibrium or, more generally, its asymptotic steady regime on much longer timescales.

A class of systems exhibiting prethermalization consists of quantum many-body systems subjected to an external periodic driving \cite{Holthaus2015,Bukov2015,Eisert2015,Kuwahara2016,
Mori2016,Moessner2017,Mallayya2019,Oka2019,Rudner2020,Harper2020,Collura2022,
Ho2023,Qi2026}. In general, such systems lack nontrivial conserved quantities, including energy, and are therefore expected to evolve towards a featureless infinite-temperature Gibbs state. However, it has been shown theoretically~\cite{Eckardt2005,DAlessio2014,Else2016,
Potirniche2017,Yao2017,Weidinger2017,Machado2019,Choi2020,Machado2020,Ye2021,Zhuo-preprint}, experimentally observed~\cite{
Rechtsman2013,Zhang2017,Choi2017,
Singh2019,RubioAbadal2020,Geier2021,Xiao-Mi-2021,Kyprianidis2021,Randall2021,Peng2021,
Beatrez2021} and established mathematically~\cite{Abanin2017,Ho2018,Else2020}
that when the driving frequency $\omega$ is large compared to the local energy scale $J$, energy absorption is strongly suppressed. In this regime, the system relaxes to a quasi-stationary prethermal state whose lifetime is exponentially long in $|\omega|/J$. 

A central feature of prethermalization is that, within the prethermal regime, the system may display nontrivial macroscopic properties and support genuine nonequilibrium phases of matter. This perspective motivates the study of \emph{Floquet systems} and the program of \emph{Floquet engineering}, in which suitably designed periodic drivings give rise to effective Hamiltonians with controlled properties \cite{Potter2016,Ye2021,Eckhardt2022,Zhang2022,Zhang2022e}. 

The possibility of engineering such phases has further motivated the study of systems subjected to quasi-periodic drivings~\cite{Blekher1992,Dumitrescu2018,Else2020,Qi2021,Long2021,Zhao2021,Mori2021,He2023,Gallone-Langella-2024,Kumar2024,He2025,Fang2025,Dutta2025,Qp-experiment-2025,QP-esperimento2,AltroQP}. In this setting, when the driving frequencies are non-resonant and large compared to the local energy scale, both theoretical and experimental works indicate that prethermalization persists over long timescales that depend sensitively on the regularity of the driving protocol. More precisely, for analytic drivings the prethermal timescale is typically stretched exponential in $|\omega|/J$, whereas for drivings with finite differentiability it has been observed experimentally \cite{He2023,He2025} and explained theoretically \cite{Mori2021,Gallone-Langella-Ck} that the relevant timescales grow only as a power law.

Prethermalization in isolated systems also plays a fundamental role in the foundations of statistical mechanics, as it provides a mechanism for delayed thermalization, reminiscent of the behavior observed in the original Fermi--Pasta--Ulam--Tsingou problem. A common framework in which prethermalization arises is that of nearly integrable (or quasi-integrable) quantum systems. In such systems, the dynamics typically leads, on intermediate timescales, to a prethermal state that can be described by a suitable Generalized Gibbs Ensemble (GGE), reflecting the presence of (approximate) conserved quantities \cite{Kollar2011,Marcuzzi2013,
Bertini2015,Mori2018,Bertini2020,Birnkammer2022,
Surace2023,Yin2025}. {In this setting as well, the prethermal state can exhibit phenomena such
as} integrable turbulence \cite{Gallone2022PRL,Gallone-Random} or asymptotic localization \cite{DeRoeck2024,DeRoeck2025}. Prototypical examples include systems with spectral gaps or weak perturbations of integrable models.

{A complementary form of prethermal stability for autonomous gapped systems has been developed in the theory of non-equilibrium almost-stationary states and of locally robust gapped sectors for certain classes of initial states \cite{Teufel2019,Yin2023}.}

In this work, we consider quantum many-body systems whose Hamiltonian can be written in the form
\begin{equation}\label{eq:E-414}
    H = N + \varepsilon P,
\end{equation}
where $N$ and $P$ are extensive local self-adjoint operators and $N$ has integer spectrum. For this class of models, it was shown in~\cite{Abanin2017} that prethermalization occurs up to times {$|t| \lesssim \exp\!\big(-\frac{1}{\varepsilon\,\ln^3 \varepsilon}\big)$}, during which the dynamics admits two approximate constants of motion. 

The main result of the present paper is an improvement of this bound: we prove that the prethermal timescale is in fact exponential in $\varepsilon^{-1}$, and that the corresponding approximate constants of motion can be chosen to be extensive local operators. {The proof is closely inspired by resonant normal-form techniques for classical Hamiltonian systems \cite{Gallone2022} and is reminiscent of the ``isochronous'' version of Nekhoroshev’s
theorem \cite{Nek79,Benettin-Gallavotti}; see \cite{Bambusi-Langella-Cinf} for a streamlined proof.}

{
\paragraph{Notation.}
All the notation used in the manuscript is standard. We write $\langle x \rangle:=(1+|x|)$.}

\subsection{Setting and main results}
We consider quantum many-body systems on a finite $d$-dimensional lattice $\Lambda=\mathbb{Z}^d \cap [-L,L]^d$. The Hilbert space of the system is $\mathcal{H}_\Lambda:=\bigotimes_{x \in \Lambda} \mathfrak{h}_x$ where $\mathfrak{h}_x$ is the on-site Hilbert space. For simplicity, we assume that {all on-site Hilbert spaces are identical and finite-dimensional}, i.e.~$\mathfrak{h}_x=\mathfrak{h} \cong \mathbb{C}^q$  for some $q \in \mathbb{N}$. {We will be particularly interested in \emph{extensive local operators}, that are operators $A$ supplied with a fixed decomposition $A=\sum_{S \in \mathcal{P}_c(\Lambda)}$ where $\mathcal{P}_c(\Lambda)$ denotes the family of connected subsets of $\Lambda$, and $A_S$ acts nontrivially only on $S$. Throughout the paper, when such an operator is called self-adjoint, we mean that its prescribed local terms are self-adjoint: $A_S = A^*_S$ for every $S$. 

A standard way to quantify the locality of an extensive local operator is to define a norm which relies on the collection $A_S$ of its local parts as
\begin{equation}\label{eq:ZioPinoloSecondo}
	\Vert A \Vert_{\kappa}:= \sup_{x \in \Lambda} \sum_{\substack{S \in \mathcal{P}_c(\Lambda) \\ x \in S}} \Vert A_S \Vert_{\mathrm{op}} e^{\kappa |S|} \, .
\end{equation}
We say that $A \in \OOO_{\kappa}$, if there exists a constant $C>0$ \emph{independent of $|\Lambda|$} such that $\Vert A \Vert_{\kappa} < C$. If $A=A_S$ for some $S \in \mathcal{P}_c(\Lambda)$, we say that $A$ is a local observable \emph{supported} on $S$. Throughout the paper, the local decomposition is fixed and regarded as part of the data.

We study systems whose Hamiltonian is a self-adjoint extensive local operator of the form}
\begin{equation}\label{eq:OriginalHamiltonian}
	H=N+\varepsilon P
\end{equation}
where $P$ is a {self-adjoint} extensive local operator, that is for some $\kappa>0$ we have $P \in \OOO_{\kappa}$, and $N=\sum_{x \in \Lambda} N_x$ is a ``number operator'', that is $\sigma(N_x) \subset{\mathbb{Z}}$ {for all $x \in \Lambda$}. 

We introduce the positive parameter
\begin{equation}\label{eq:EpsZero}
	\varepsilon_0 {=} \frac{(e-1){\kappa}}{64 \pi (3(e-1)+e^2) \Vert P \Vert_{\kappa} } \, ,
\end{equation}
which sets the scale of the effective spectral gap relative to the perturbation strength.

We now state our main result, which establishes exponentially long prethermalization and the existence of quasi-conserved quantities.
\begin{theorem}\label{thm:Main}
	Let $\kappa>0$ and {let} $H{=N+\varepsilon P}$ be {as} in \eqref{eq:OriginalHamiltonian}, {where $N=\sum_{x \in \Lambda} N_x$. Assume that each $N_x$ is supported on $\{x\}$, that $N,P \in \OOO_\kappa$ are self-adjoint and that $\sigma(N_x) \subset \mathbb{Z}$ for any $x \in \Lambda$}. Assume that ${0<}\varepsilon < \varepsilon_0$. Then there exist two extensive local {self-adjoint} operators $\mathcal{N},\mathcal{Z} \in \OOO_{\frac{\kappa}{2}}$ and four positive constants $\mathsf{C}_1$, $\mathsf{C}_2$, $\mathsf{C}_3$ and $\mathsf{C}_4$ {independent of $|\Lambda|$} such that
	\begin{itemize}
		\item[(i)] \emph{Structure:} $\mathcal{N}$ has integer spectrum $\sigma(\mathcal{N}) \subset \mathbb{Z}$;
		\item[(ii)] \emph{Closeness to $N$:} $\mathcal{N}$ is close to $N$ in operator norm, i.e. $|\Lambda|^{-1}\Vert \mathcal{N}-N \Vert_{\mathrm{op}} \leq \mathsf{C}_1 \varepsilon \Vert N \Vert_{\kappa}$. 
		\item[(iii)] \emph{Quasi conservation:} {The variations of $\mathcal{N}$ and $\mathcal{Z}$ remain exponentially small for exponentially long times:}
		\begin{equation}\label{eq:MainBound333}
			\frac{1}{|\Lambda|} \Vert e^{\ii H t} \mathcal{N} e^{-\ii H t} - \mathcal{N} \Vert_{\mathrm{op}} \leq \mathsf{C}_2 \Vert N \Vert_{\frac{3 \kappa}{4}} \, |t| \, \varepsilon e^{-\frac{\varepsilon_0}{\varepsilon}} \, , \quad \frac{1}{|\Lambda|} \Vert e^{\ii H t} \mathcal{Z} e^{-\ii H t} - \mathcal{Z} \Vert_{\mathrm{op}} \leq \mathsf{C}_3 |t| \, \varepsilon^2 \, e^{-\frac{\varepsilon_0}{\varepsilon}} \, .
		\end{equation}
	\item[(iv)] \emph{Commutativity:} $[\mathcal{N},\mathcal{Z}]=0$.
	\item[(v)] \emph{Effective dynamics:} Let $H_{\mathrm{eff}}=\mathcal{N}+\mathcal{Z}$. Then, the dynamics generated by $H_{\mathrm{eff}}$ approximates the true dynamics on local observables: let $O$ be a local observable supported on $S$, then
	\begin{equation}
		\Vert e^{\ii H t}Oe^{-\ii H t}-e^{\ii (\mathcal{N}+\mathcal{Z})t} O e^{-\ii (\mathcal{N}+\mathcal{Z})t} \Vert_{\mathrm{op}} \leq \mathsf{C}_4 {\Vert O \Vert_{\mathrm{op}}} \varepsilon \, , \qquad \forall |t| \leq e^{\frac{\varepsilon_0}{(d+1) \varepsilon}} \, .
	\end{equation}
	{The constant $\mathsf{C}_4$ depends on $|S|$.}
	\end{itemize}
\end{theorem}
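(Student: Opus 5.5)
The plan is a quantum resonant normal form in the spirit of Abanin--De Roeck--Ho--Huveneers, but with the iteration carried out in the classical Nekhoroshev style: one unitary per step with a fixed loss of locality per step, so that $n_*\sim \varepsilon_0/\varepsilon$ steps fit inside the budget $\kappa\to\kappa/2$. The key structural fact is that $N$ has integer spectrum, so $e^{\ii N s}$ is $2\pi$-periodic. Any extensive local $A$ then splits as $A=\langle A\rangle+\widetilde A$, with
\begin{equation*}
\langle A\rangle:=\frac{1}{2\pi}\int_0^{2\pi}e^{\ii N s}Ae^{-\ii N s}\,\ud s ,
\end{equation*}
and the cohomological equation $\ii[N,G]=\widetilde A$ is solved by $G=\frac{1}{2\pi}\int_0^{2\pi}s\,e^{\ii Ns}\widetilde A e^{-\ii Ns}\,\ud s$ (up to sign conventions). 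Since $N=\sum_x N_x$ is a sum of commuting one-site operators, conjugation by $e^{\ii Ns}$ maps each $A_S$ to an operator supported on $S$ with the same norm. Hence $\Vert\langle A\rangle\Vert_{\kappa'}\le\Vert A\Vert_{\kappa'}$ and $\Vert G\Vert_{\kappa'}\le \pi\Vert A\Vert_{\kappa'}$, with \emph{no} loss in $\kappa'$. This is where the $\pi$ in $\varepsilon_0$ comes from.

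The iteration is as follows. Set $H_0=N+\varepsilon P=N+Z_0+R_0$ with $Z_0=0$, $R_0=\varepsilon P$, and at step $n$ suppose
\begin{equation*}
H_n=e^{\ii G_{n-1}}\cdots H\cdots = N+Z_n+R_n ,\qquad [Z_n,N]=0,\qquad \Vert R_n\Vert_{\kappa_n}\lesssim \varepsilon(C\varepsilon n)^n .
\end{equation*}
Solve $\ii[N,G_n]=\widetilde R_n$ and conjugate by $e^{\ii G_n}$. This gives $Z_{n+1}=Z_n+\langle R_n\rangle$, while $R_{n+1}$ collects the terms $e^{\ii \mathrm{ad}_{G_n}}$ applied to $Z_n+R_n$ minus identity, together with the second-order remainder of the $N$ part. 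The commutator estimate I would use is the standard one for the norm \eqref{eq:ZioPinoloSecondo}:
\begin{equation*}
\Vert[A,B]\Vert_{\kappa-\delta}\le \frac{C}{\delta}\Vert A\Vert_\kappa\Vert B\Vert_{\kappa},
\end{equation*}
obtained from $|S\cup S'|e^{-\delta|S\cup S'|}\le 1/(e\delta)$. Iterated commutators are then bounded by summing the exponential series, which produces the factors $e$, $e-1$ in $\varepsilon_0$. I would choose the uniform loss $\delta=\kappa/(2n_*)$, so that $\kappa_n=\kappa-n\kappa/(2n_*)\ge\kappa/2$, and show $\Vert R_{n+1}\Vert\le \frac{C n_*}{\kappa}\Vert P\Vert_\kappa\,\varepsilon\,\Vert R_n\Vert$. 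Stopping at $n_*=\lfloor\varepsilon_0/\varepsilon\rfloor$ makes the contraction factor $\le e^{-1}$, which gives $\Vert R_{n_*}\Vert_{\kappa/2}\lesssim\varepsilon^2e^{-\varepsilon_0/\varepsilon}$. The exact constant in $\varepsilon_0$ comes out of this bookkeeping. \textbf{I expect this step to be the main obstacle:} one must keep the geometric decay while the locality loss is spread uniformly, and must control $Z_n$, which also generates commutators, in the same norm.

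Let $U=e^{\ii G_0}\cdots e^{\ii G_{n_*-1}}$ be the product of the unitaries and define $\mathcal N=UNU^*$ and $\mathcal Z=UZ_{n_*}U^*$, with signs arranged so that $U^*HU=N+Z_{n_*}+R_{n_*}$. Then (i) holds because $\mathcal N$ is unitarily equivalent to $N$. Item (iv) follows from $[N,Z_{n_*}]=0$. For (ii), expand $UNU^*-N$ as commutators with the $G_k$. The dominant term is $[G_0,N]=O(\varepsilon)$ per site, and the later terms are smaller; the locality of $\mathcal N,\mathcal Z$ in $\OOO_{\kappa/2}$ follows from the same commutator estimates (with $\Vert N\Vert$ measured at $3\kappa/4$ for the $\mathcal N$ bound). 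For (iii), write
\begin{equation*}
\frac{\ud}{\ud t}e^{\ii Ht}\mathcal N e^{-\ii Ht}=e^{\ii Ht}\ii[H,\mathcal N]e^{-\ii Ht},\qquad [H,\mathcal N]=U[R_{n_*},N]U^* .
\end{equation*}
Its operator norm is at most $|\Lambda|$ times a local norm of $[R_{n_*},N]$, which is bounded by $\Vert N\Vert\cdot\varepsilon e^{-\varepsilon_0/\varepsilon}$; integrating in $t$ gives the first bound. The bound for $\mathcal Z$ is the same, using $\Vert Z\Vert\lesssim\varepsilon$.

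For (v), a Duhamel argument gives
\begin{equation*}
e^{\ii Ht}Oe^{-\ii Ht}-e^{\ii H_{\rm eff}t}Oe^{-\ii H_{\rm eff}t}=\int_0^t e^{\ii Hs}\,\ii\big[UR_{n_*}U^*,\,e^{\ii H_{\rm eff}(t-s)}Oe^{-\ii H_{\rm eff}(t-s)}\big]e^{-\ii Hs}\,\ud s .
\end{equation*}
A Lieb--Robinson bound for the local generator $H_{\rm eff}$ confines the evolved $O$ to a ball of radius $\sim v|t|$. Only the terms of the remainder in that region contribute, which gives a bound $\lesssim |S|\,|t|^{d+1}\varepsilon^2e^{-\varepsilon_0/\varepsilon}$. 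This is $\lesssim\varepsilon$ for $|t|\le e^{\varepsilon_0/((d+1)\varepsilon)}$, and the constant depends on $|S|$, as in the statement.
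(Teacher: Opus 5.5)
\paragraph{Overall.} Your proposal follows the paper's proof: the averaged solution of the homological equation with no loss of locality (Lemma \ref{lem:Homological}), a uniform loss $\delta\propto\kappa/n_*$ per step with $n_*=\lfloor\varepsilon_0/\varepsilon\rfloor$ and contraction $e^{-1}$ (Lemma \ref{lem:Iterative}), dressing by the product of time-one unitaries, and Duhamel plus \eqref{eq:E447} for (iii). Some bookkeeping needs correcting.
\begin{itemize}
\item[(a)] With $\delta=\kappa/(2n_*)$ the normal form uses the whole budget $\kappa\to\kappa/2$. That leaves no locality for conjugating $N$, $Z_{n_*}$ and $R_{n_*}$ back by $U$, so $\mathcal N,\mathcal Z\in\OOO_{\kappa/2}$ does not follow as written. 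The same goes for the locality of $UR_{n_*}U^*$ that you need in (v). The paper takes $\delta=\kappa/(4n_*)$, so the normal form lives at $3\kappa/4$. It then spends $3\kappa/4\to\kappa/2$ on the back-conjugation (Lemma \ref{lem:UnitaryConjugation}(ii)).
\item[(b)] The remainder is only $\Vert R_{n_*}\Vert\lesssim\varepsilon e^{-\varepsilon_0/\varepsilon}$, not $\varepsilon^2e^{-\varepsilon_0/\varepsilon}$. The contraction factor is of order $n_*\varepsilon\Vert P\Vert_\kappa/\kappa\le\varepsilon_0\Vert P\Vert_\kappa/\kappa$, which is independent of $\varepsilon$, so no extra power of $\varepsilon$ is gained even at the first step. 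For the same reason your inductive hypothesis should read $(C\varepsilon n_*)^n$, not $(C\varepsilon n)^n$. This slip is harmless: (iii) and (v) only need $\varepsilon e^{-\varepsilon_0/\varepsilon}$, which is what you actually use for $\mathcal N$.
\item[(c)] You assert a recursion free of $\Vert N\Vert$, which the $N$-independent $\varepsilon_0$ requires, but you do not say how you get it. The second-order remainder of the $N$ part must be bounded through $[G_n,N]=\pm\ii\widetilde R_n$ via \eqref{eq:CommEstimate}, not through $\Vert N\Vert$.
\end{itemize}

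\paragraph{Item (v).} Here your route differs slightly from the paper's. You apply Duhamel directly to $H-H_{\mathrm{eff}}=UR_{n_*}U^*$ and use a Lieb--Robinson bound for the dressed generator $\mathcal N+\mathcal Z\in\OOO_{\kappa/2}$. The paper first replaces $O$ by $Y^*OY$, at an $O(\varepsilon)$ cost (Lemma \ref{lem:UnitaryConjugation}(iv)). It then applies Lieb--Robinson to the undressed generator $N+Z^{(n_*)}$ in the rotated frame. Both routes work with Lemma \ref{lem:LibRobinson}; yours needs $\rho=\kappa/4$, together with $\Vert UR_{n_*}U^*\Vert_{\kappa/2}\le 4\Vert R_{n_*}\Vert_{3\kappa/4}$. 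Your version avoids the $O(\varepsilon)$ term and gives an error $\lesssim\langle t\rangle^{d+1}\varepsilon e^{-\varepsilon_0/\varepsilon}$. The paper's version keeps the Lieb--Robinson constant in terms of $\Vert N\Vert_\kappa+\Vert Z^{(n_*)}\Vert$ rather than the norms of the dressed operators.
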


\begin{remark}\textbf{}\label{remarkio}
	\begin{itemize}
			\item[(i)] All constants in Theorem \ref{thm:Main} are independent of the size $|\Lambda|$ which means that the results hold in the thermodynamic limit. {Indeed, since all constants and the local norms are uniform in $|\Lambda|$, one can take the thermodynamic limit on all the estimates of Theorem \ref{thm:Main}. We also underline that Theorem \ref{thm:Main} does not provide any information on the dynamics in the infinite volume case, in particular does not provide information on the limit $L\to+\infty$ of the operators. Constructing an infinite-volume dynamics and passing these estimates to that limit require compatible families of interactions and a separate limiting argument; these issues are not addressed here. The constants in Theorem \ref{thm:Main} can be chosen as follows:}
		\[
			\begin{split}
				\mathsf{C}_1&=256 \pi  \frac{e}{e-1}  \frac{e^{-\frac{\kappa}{4}}}{\kappa} \Vert P \Vert_{\kappa} \, , \\
				\mathsf{C}_2&=\frac{16}{3 \kappa} \Vert P \Vert_{\kappa} \, , \\
				\mathsf{C}_3&= \frac{16}{3 \kappa} \frac{e}{e-1} \Vert P \Vert_{\kappa}^2 \, ,\\
				\mathsf{C}_4&=8 \pi \frac{e}{e-1} \Vert P \Vert_\kappa |S|+2^{d+1} C_{LR} \langle \Vert N \Vert_\kappa+\varepsilon_0 {\textstyle\frac{e}{e-1}} \Vert P \Vert_{\kappa} \rangle^d e \Vert P \Vert_\kappa \, ,
			\end{split}		
		\]
		where $C_{LR}$ is the constant of Lemma \ref{lem:LibRobinson}. {We also remark that $\mathsf{C}_1$, $\mathsf{C}_2$ and $\mathsf{C}_3$ are independent of $\Vert N \Vert_{\kappa}$.}
		\item[(ii)] {Items (ii) and (iii) also imply the quasi-conservation of $N$  for} exponentially long times, namely
		\begin{equation}
			\frac{1}{|\Lambda|} \Vert e^{\ii H t} N e^{-\ii H t}-N \Vert_{{\mathrm{op}}} \leq {(2 \mathsf{C}_1+\mathsf{C}_2) \Vert N \Vert_{\kappa} \varepsilon} \, , \qquad \forall |t| \leq e^{\frac{\varepsilon_0}{\varepsilon}} \, .
		\end{equation}
		{Nevertheless, such a bound can be improved by using the conservation of energy only. Indeed, using that $N=H-\varepsilon P$, one has
		\begin{equation}
			\frac{1}{|\Lambda|} \Vert e^{\ii H t} N e^{-\ii H t}-N\Vert_{\mathrm{op}} = \frac{\varepsilon}{|\Lambda|} \Vert -e^{\ii H t}Pe^{-\ii H t}+P \Vert_{\mathrm{op}} \leq 2 \varepsilon \Vert P \Vert_0 \, , \qquad \forall t \in \mathbb{R}.
		\end{equation}
		Thus, the nontrivial content in Theorem \ref{thm:Main} is the quasi-conservation of the \emph{dressed} quantity $\mathcal{N}$ and the existence of another quasi-conserved quantity $\mathcal{Z}$.}
		\item[(iii)] The result is based on a normal form transformation $H \mapsto YHY^*$ where $Y=e^{-\ii G^{(n_*-1)}} \cdots e^{-\ii G^{(0)}}$ is {is a composition of time-one unitary flows generated by self-adjoint extensive local operators} $G^{(j)}$. We prove in Lemma \ref{lem:Iterative} that $YHY^*=N+Z^{(n_*)}+P^{(n_*)}$ where $P^{(n_*)} = \mathscr{O}(e^{-\frac{\varepsilon_0}{\varepsilon}})${,} $Z^{(n_*)}=\mathscr{O}(\varepsilon)$ and $[Z^{(n_*)},N]=0$. The dressed operators are defined as $\mathcal{N}=Y^* N Y$ and $\mathcal{Z}=Y^* Z^{(n_*)} Y$.
		\item[(iv)] When reading the second equation in Item \emph{(iii)}, one should keep in mind that $\mathcal{Z} =\mathscr{O}(\varepsilon)$ and therefore the quasi-conservation of $\mathcal{N}$ and $\mathcal{Z}$ holds on the same time-scale. 
		\item[(v)] Our Theorem \ref{thm:Main} {improves Theorem 3 in \cite{Abanin2017} in two respects. First, Theorem 3 of \cite{Abanin2017} yields timescale $|t| \leq e^{\frac{\varepsilon_0}{\varepsilon{(1-\ln( \varepsilon/\varepsilon_0))^3}}}$, whereas our estimate removes the logarithmic correction. Second, our resulting dressed operators belong to a fixed class $\OOO_{\frac{\kappa}{2}}$ with a locality exponent that does not tend to zero as $\varepsilon \to 0$.}
		\item[(vi)] {The estimate for $\mathcal{Z}$ in Item (iii) does not depend explicitly on $\Vert N \Vert_{\kappa}$. Subject to suitable domain assumptions for the unbounded operators and their commutators, the same strategy can be extended to separable infinite-dimensional on-site Hilbert spaces.}
		\item[(vii)] With simple modifications of the proof, Theorem \ref{thm:Main} can be extended to the case of number operators $N$ which are sum of local {terms} supported on larger sets $S$. In this {setting, such a result can be proved assuming that the perturbation is} \emph{strongly local} (see e.g.~\cite{DeRoeck-Verreet,Gallone-Langella-2024}). 
	\end{itemize}
\end{remark}

\subsection{An application: Quantum Ising model in strong magnetic field}
We consider the quantum Ising model on the $d$-dimensional lattice $\Lambda=\mathbb{Z}^d \cap [-L,L]^d$. Using the standard notation for Pauli matrices
\begin{equation}
	X=\sigma^{(1)}=\begin{pmatrix}
		0 & 1 \\
		1 & 0 
	\end{pmatrix} \, , \quad Y=\sigma^{(2)}=\begin{pmatrix}
		0 & -\ii \\
		\ii & 0
	\end{pmatrix} \, , \quad Z=\begin{pmatrix}
		1 & 0 \\
		0 & -1
	\end{pmatrix} \, ,
\end{equation}
the Hamiltonian of the model we consider is
\begin{equation}
	H= \sum_{x \in \Lambda} Z_x -\varepsilon \sum_{\substack{x,y \in \Lambda \\ {\text{s.t. }} |x-y|=1}} X_x X_{y} \, .
\end{equation}
This Hamiltonian {has the on-site form required by Theorem \ref{thm:Main} and} the structure of \eqref{eq:E-414} with 
\begin{equation}
	N=\sum_{x \in \Lambda} Z_x \, , \qquad P={-}\sum_{\substack{x{,y} \in \Lambda \\ {\text{s.t. }|x-y|=1}}} X_x X_y \, .
\end{equation}
{Here $N$ represents the total magnetization in the $z$ direction. Decomposing $P$ into nearest-neighbour bond terms gives $\Vert P \Vert_{\kappa} \leq 2d e^{\kappa}$. Hence, Items (ii) and (iii) of Theorem \ref{thm:Main} imply that, if $\varepsilon < \varepsilon_0=\frac{(e-1){\kappa}}{64 \pi (3(e-1)+e^2) {2d \, }e^{2\kappa} }$, the dressed magnetization $\mathcal{N}$ is quasi-conserved for exponentially long times. More precisely,

\begin{equation}
	\frac{1}{|\Lambda|} \Vert e^{\ii H t} \mathcal{N} e^{-\ii H t} - \mathcal{N} \Vert_{\mathrm{op}} \leq \mathsf{C}_2 \Vert N \Vert_{\frac{3 \kappa}{4}} \varepsilon e^{-\frac{\varepsilon_0}{2 \varepsilon}} \, , \qquad |t| \leq e^{\frac{\varepsilon_0}{2 \varepsilon}} \, .
\end{equation} 

The normal form construction also identifies the leading-order contribution to $\mathcal{Z}$. Indeed, \eqref{eq:BSHomo} gives
\begin{equation}
	\frac{1}{2 \pi} \int_0^{2\pi} e^{-\ii \theta N} X_x X_y e^{\ii N \vartheta} \, d \vartheta=\frac{1}{2}(X_xX_y+Y_xY_y) \, ,
\end{equation}
and therefore
\begin{equation}
	\mathcal{Z}=-\frac{\varepsilon}{2}\sum_{\substack{x,y \in \Lambda \\ \text{s.t. } |x-y|=1}} (X_xX_y+Y_xY_y)+\mathscr{O}(\varepsilon^2) \, .
\end{equation}
Here the reminder $\mathscr{O}(\varepsilon^2)$ is intended as $\Vert \mathcal{Z}-P \Vert_{\frac{\kappa}{2}} \lesssim \varepsilon^2 \Vert P \Vert_\kappa$, which is uniform in the volume. 

As stated in Remark \ref{remarkio}(iii)--(iv), $\mathcal{Z}=\mathscr{O}(\varepsilon)$. Consequently, the two estimates in \eqref{eq:MainBound333} are effective on the same timescale.}

\section{Normal form}

The proof of Theorem~\ref{thm:Main} is based on an iterative (non-convergent) normal form scheme, whose goal is to progressively eliminate the non-resonant part of the Hamiltonian.

Starting from a Hamiltonian $H^{(n)}$, we construct a finite sequence of unitarily equivalent Hamiltonians 
\begin{equation}
	H^{(n+1)} = e^{-\ii G^{(n)}} H^{(n)} e^{\ii G^{(n)}},
\end{equation}
where $G^{(n)}$ is an extensive local self-adjoint operator. At each step, the Hamiltonian is decomposed as
\begin{equation}
	H^{(n)} = N + Z^{(n)} + P^{(n)},
\end{equation}
where $[N, Z^{(n)}] = 0$, while $P^{(n)}$ is a perturbative term of size $\mathscr{O}(e^{-n})$.

The generator $G^{(n)}$ and the correction (resonant part) $\tilde{Z}^{(n)}$ are defined by solving the homological equation
\begin{equation}
	-\ii [G^{(n)}, N] + P^{(n)} = \tilde{Z}^{(n)},
\end{equation}
with the constraint $[\tilde{Z}^{(n)}, N] = 0$. With this choice, one obtains
\[
H^{(n+1)} = N + Z^{(n+1)} + P^{(n+1)},
\]
where $Z^{(n+1)} = Z^{(n)} + \tilde{Z}^{(n)}$ and $P^{(n+1)} = \mathscr{O}(e^{-(n+1)})$.

The number of normal form steps increases as $\varepsilon$ decreases. By optimally truncating the iteration, one obtains a cutoff $n_* {=} \left\lfloor \frac{\varepsilon_0}{\varepsilon} \right\rfloor$ 
so that the remainder satisfies $
P^{(n_*)} = \mathscr{O}(e^{-n_*}) = \mathscr{O}\big(e^{-\frac{\varepsilon_0}{\varepsilon}}\big).$

The extensive locality of the dressed operators $\mathcal{N}$ and $\mathcal{Z}$ follows from that of $N$ and $Z^{(n_*)}$, together with the fact that the unitary transformation $Y$ is given by a composition of flows generated by extensive local operators.

Finally, to obtain estimates that are independent of $\Vert N\Vert_\kappa$, we exploit the fact that each $G^{(n)}$ solves the homological equation. This ensures, in particular, that Item~(iii) of Theorem~\ref{thm:Main} remains valid even for unbounded number operators $N$.

\subsection{Conjugation and homological equation}

As discussed above, a central ingredient of our approach is the solution of the homological equation. In this subsection, we provide bounds on conjugations generated by extensive local self-adjoint operators, and we describe the explicit solution of the homological equation.

Let us first introduce the notation
\begin{equation}\label{eq:AdjAB}
	\mathrm{Ad}_A^0 B := B \, , \qquad 
	\mathrm{Ad}_A^j B :={ [A,\mathrm{Ad}_A^{j-1} B]}, \quad j \geq 1,
\end{equation}
for $A,B \in \OOO_\kappa$. {The conjugation of $H$ by $e^{-\ii G}$ can then be expanded as}
\begin{equation}
	e^{-\ii G} H e^{\ii G} = \sum_{j=0}^{\infty} \frac{(-\ii)^j}{j!} \mathrm{Ad}_G^j H \, .
\end{equation}

{Because every operator is supplied with a fixed local decomposition, we
must specify the induced local decomposition of a commutator. Indeed, if $A,B \in \OOO_{\rho}$ for some $\rho>0$, we have
\begin{equation}
	[A,B]=\sum_{S \in \mathcal{P}_c(\Lambda)} ([A,B])_S \, , \qquad  ([A,B])_S:=\sum_{\substack{S_1 \cup S_2=S \\ S_1 \cap S_2 \neq \varnothing}} [A_{S_1}, B_{S_2}] \, .
\end{equation}}

The following Lemma, taken from Eq.~(A.3) in \cite{Gallone-Langella-2024}, gives bounds on iterated commutators; these bounds form the key technical ingredient {for} Lemma~\ref{lem:Commutatori}.

\begin{lemma}\label{lem:D-345}
	Let $\kappa,\delta>0$ and $A,B \in \OOO_{\kappa+\delta}$. Then{, for $j\geq 1$},
	\begin{equation}\label{eq:ALe601}
		\Vert \mathrm{Ad}_A^j B \Vert_{\kappa} \leq \left( \frac{j}{e}\right)^j \frac{(4 e^{-\kappa})^j}{\delta^j} \Vert A \Vert_{\kappa+\delta}^j \Vert B \Vert_{\kappa+\delta} \, .
	\end{equation}
\end{lemma}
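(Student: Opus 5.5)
We prove the bound by induction on $j$, splitting the loss of locality $\delta$ into $j$ equal pieces and controlling one commutator at a time. The base ingredient is a single-commutator estimate: for $0<\kappa'<\kappa''$ and $A,C\in\OOO_{\kappa''}$,
\begin{equation*}
	\Vert [A,C] \Vert_{\kappa'} \leq \frac{c\, e^{-\kappa'}}{\kappa''-\kappa'} \Vert A \Vert_{\kappa''} \Vert C \Vert_{\kappa''} \, .
\end{equation*}
Here $c$ is an absolute constant, and we aim for $c=4/e$ (or a comparable form) so that the final product produces the factor $(j/e)^j(4e^{-\kappa})^j\delta^{-j}$.

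To prove the single-commutator estimate, use the induced decomposition $([A,C])_S=\sum_{S_1\cup S_2=S,\,S_1\cap S_2\neq\varnothing}[A_{S_1},C_{S_2}]$ and bound $\Vert[A_{S_1},C_{S_2}]\Vert_{\mathrm{op}}\le 2\Vert A_{S_1}\Vert_{\mathrm{op}}\Vert C_{S_2}\Vert_{\mathrm{op}}$. Fix $x$ and sum over $S\ni x$; then either $x\in S_1$ or $x\in S_2$.
\begin{itemize}
	\item Suppose $x\in S_1$. Since $S_1\cap S_2\ne\varnothing$, pick $y\in S_1$ lying in $S_2$.
	\item Summing over $y\in S_1$ gives at most $|S_1|$ choices.
	\item Because $|S|\le |S_1|+|S_2|$, we have $e^{\kappa'|S|}\le e^{\kappa'|S_1|}e^{\kappa'|S_2|}$.
	\item Write $|S_1|e^{\kappa'|S_1|}\le \frac{1}{e(\kappa''-\kappa')}e^{\kappa''|S_1|}$, using $\sup_n n e^{-\eta n}=1/(e\eta)$.
	\item The sum over $S_2\ni y$ is then bounded by $\Vert C\Vert_{\kappa''}$.
\end{itemize}
The case $x\in S_2$ is symmetric. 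This gives a factor $4/(e(\kappa''-\kappa'))$.

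The extra $e^{-\kappa'}$ comes from sharpening $|S|\le|S_1|+|S_2|-1$, since the two sets overlap in at least one site. This yields $e^{\kappa'|S|}\le e^{-\kappa'}e^{\kappa'|S_1|}e^{\kappa'|S_2|}$. I expect this bookkeeping to be the delicate point, in particular getting exactly the constant $4e^{-\kappa}$ rather than $e^{-\kappa'}$ at intermediate levels. Since $\kappa'\ge\kappa$ at every step, we have $e^{-\kappa'}\le e^{-\kappa}$, so this suffices.

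For the iteration, set $\kappa_i=\kappa+\delta(1-i/j)$ for $i=0,\dots,j$, so that $\kappa_0=\kappa+\delta$ and $\kappa_j=\kappa$. Apply the one-step bound with $C=\mathrm{Ad}^{i-1}_A B\in\OOO_{\kappa_{i-1}}$. Use $\Vert A\Vert_{\kappa_{i-1}}\le\Vert A\Vert_{\kappa+\delta}$, which holds by monotonicity of the norm in the exponent. Each step then costs a factor $\frac{4e^{-\kappa}}{e}\cdot\frac{j}{\delta}\Vert A\Vert_{\kappa+\delta}$. Multiplying the $j$ factors gives $\left(\frac{j}{e}\right)^j\frac{(4e^{-\kappa})^j}{\delta^j}\Vert A\Vert^j_{\kappa+\delta}\Vert B\Vert_{\kappa+\delta}$, which is the claimed estimate.

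The main obstacle is the one-step estimate with the sharp constant. In particular, we must check that the counting over the choice of overlap site $y$ and the two cases ($x\in S_1$ or $x\in S_2$) contributes exactly the factor $4$. If the overlap-site counting produces the $|S_1|$ factor on the wrong set, I would instead route the polynomial factor through whichever set carries the larger exponent loss, and keep the symmetric split.
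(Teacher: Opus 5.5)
Your proposal is correct and is essentially the standard argument behind Eq.~(A.3) of \cite{Gallone-Langella-2024}, to which the paper defers. The one-step bound $\Vert [A,C]\Vert_{\kappa'}\le \frac{4e^{-\kappa'}}{e(\kappa''-\kappa')}\Vert A\Vert_{\kappa''}\Vert C\Vert_{\kappa''}$ does come out with exactly the constant $4/e$, since in each case the factor $|S_1|$ (or $|S_2|$) falls on the set containing $x$, whose norm absorbs it via $n e^{-\eta n}\le 1/(e\eta)$; so your closing worry is unfounded, and iterating over $j$ shells of width $\delta/j$ gives the claim, even with the slightly sharper factor $\prod_{i=1}^{j}e^{-\kappa_i}\le e^{-j\kappa}$.
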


{\begin{proof}
The proof follows that of Eq.~(A.3) in \cite{Gallone-Langella-2024} verbatim, so we omit it. That argument uses only the hypotheses stated in Lemma \ref{lem:D-345}.
\end{proof}}

\begin{lemma}\label{lem:Commutatori}
	Let $A,B \in \OOO_{\kappa+\delta}$ and $\eta \in (0,1)$ satisfy
	\begin{equation}
		\frac{4 e^{-\kappa} \Vert A \Vert_{\kappa+\delta}}{\delta} \leq \eta.
	\end{equation}
	Then the conjugation $e^A B e^{-A}$ satisfies the bounds
	\begin{align}
		\label{eq:Conjugation}\Vert e^A B e^{-A} \Vert_{\kappa} &\leq \frac{1}{1-\eta} \, \Vert B \Vert_{\kappa+\delta}, \\ \label{eq:FirstOrderEstimate}
		\Vert e^A B e^{-A} - B \Vert_{\kappa} &\leq C_\eta e^{-\kappa} \frac{1}{\delta} \Vert A \Vert_{\kappa+\delta} \Vert B \Vert_{\kappa+\delta},
	\end{align}
	with $C_\eta = 4/(1-\eta)$.
{If, moreover, $[A,B] \in \mathcal{O}_{\kappa+\delta}$, then}
	\begin{equation}
		\Vert e^A B e^{-A} - B - [A,B] \Vert_{\kappa} \leq C_\eta e^{-\kappa} \frac{1}{\delta} \Vert A \Vert_{\kappa+\delta} \Vert [A,B] \Vert_{\kappa+\delta}, \label{eq:CommEstimate}	
	\end{equation}
{ with the same constant $C_\eta$ as above.}
\end{lemma}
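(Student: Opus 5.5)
We expand the conjugation as the Lie series $e^A B e^{-A}=\sum_{j\ge 0}\frac{1}{j!}\mathrm{Ad}_A^j B$, understood with the induced local decomposition of iterated commutators. We then estimate the $\Vert\cdot\Vert_\kappa$ norm term by term via Lemma \ref{lem:D-345}.

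For $j\ge1$, Lemma \ref{lem:D-345} together with the elementary inequality $j^j/(e^j j!)\le 1$ (which follows from $j!\ge (j/e)^j$) gives
\begin{equation*}
\frac{1}{j!}\Vert \mathrm{Ad}_A^j B\Vert_\kappa \le \frac{j^j}{e^j j!}\Big(\frac{4e^{-\kappa}\Vert A\Vert_{\kappa+\delta}}{\delta}\Big)^j \Vert B\Vert_{\kappa+\delta}\le \eta^j \Vert B\Vert_{\kappa+\delta}.
\end{equation*}
Since $\Vert B\Vert_\kappa\le \Vert B\Vert_{\kappa+\delta}$, summing the geometric series over $j\ge0$ yields \eqref{eq:Conjugation}. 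The series converges in $\Vert\cdot\Vert_\kappa$ because $\eta<1$. In finite volume it also converges in operator norm to $e^ABe^{-A}$, so the identification of the sum with the conjugation and of its local parts is legitimate. I would state this point explicitly: on a finite lattice everything is a finite-dimensional operator identity, and the local decomposition is defined by the series.

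For \eqref{eq:FirstOrderEstimate}, drop the $j=0$ term and factor out one power:
\begin{equation*}
\sum_{j\ge1}\eta^{j}\Vert B\Vert_{\kappa+\delta}\le \frac{4e^{-\kappa}\Vert A\Vert_{\kappa+\delta}}{\delta}\sum_{j\ge1}\eta^{j-1}\Vert B\Vert_{\kappa+\delta}=\frac{4}{1-\eta}\,\frac{e^{-\kappa}}{\delta}\Vert A\Vert_{\kappa+\delta}\Vert B\Vert_{\kappa+\delta}.
\end{equation*}
This gives the claim with $C_\eta=4/(1-\eta)$.

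For \eqref{eq:CommEstimate}, write $\mathrm{Ad}_A^j B=\mathrm{Ad}_A^{j-1}[A,B]$ and apply Lemma \ref{lem:D-345} with $B$ replaced by $C:=[A,B]\in\OOO_{\kappa+\delta}$. The sum over $j\ge2$ becomes
\begin{equation*}
\sum_{j\ge 2}\frac{1}{j!}\Vert\mathrm{Ad}_A^{j-1}C\Vert_\kappa\le\sum_{j\ge2}\frac{(j-1)!}{j!}\,\frac{(j-1)^{j-1}}{e^{j-1}(j-1)!}\,\eta^{j-1}\Vert C\Vert_{\kappa+\delta}\le \sum_{k\ge1}\eta^k\Vert C\Vert_{\kappa+\delta}.
\end{equation*}
Factoring out one $\eta=4e^{-\kappa}\Vert A\Vert_{\kappa+\delta}/\delta$ gives the stated bound, with the improvable constant $4/(1-\eta)$ and even a spare factor $1/j$.

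The main subtlety is bookkeeping rather than analysis. One must check that the local decomposition of $\mathrm{Ad}_A^{j}B$ induced by the commutator rule coincides with that of $\mathrm{Ad}_A^{j-1}C$, where $C$ carries its own induced decomposition. This holds because the induced decomposition is defined recursively by exactly the same rule. One must also check that Lemma \ref{lem:D-345} is applicable to $C$, which is why the hypothesis $[A,B]\in\OOO_{\kappa+\delta}$ is required.
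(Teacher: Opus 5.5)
Your proposal is correct and follows essentially the paper's route: expand $e^ABe^{-A}$ as a Lie series, bound each term by Lemma \ref{lem:D-345} with $j!\ge (j/e)^j$, sum the geometric series, and for \eqref{eq:CommEstimate} rewrite $\mathrm{Ad}_A^jB=\mathrm{Ad}_A^{j-1}[A,B]$. The only difference is that you derive \eqref{eq:FirstOrderEstimate} from the same series, whereas the paper cites it from \cite{Gallone-Langella-2024}.

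One small bookkeeping fix is needed. Keep the actual ratio $\eta_0:=4e^{-\kappa}\Vert A\Vert_{\kappa+\delta}/\delta\le\eta$ in the terms and bound $\sum_{j\ge1}\eta_0^j\le \eta_0/(1-\eta)$. As you wrote it, $\sum_{j\ge1}\eta^j\le\eta_0\sum_{j\ge1}\eta^{j-1}$ runs the wrong way whenever $\eta_0<\eta$. The same correction applies to the final step for \eqref{eq:CommEstimate}.
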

\begin{proof}
	Equation \eqref{eq:FirstOrderEstimate} follows from {Eq.~(4.6) of} Lemma 4.2 in \cite{Gallone-Langella-2024}.
	
	Equation \eqref{eq:Conjugation} follows from
	\begin{equation*}
		\begin{split}
			\Vert e^A B e^{-A} \Vert_{\kappa} & \leq  \sum_{j=0}^\infty \frac{1}{j!} \Vert\mathrm{Ad}_A^j(B) \Vert_{\kappa} \overset{\text{\eqref{eq:ALe601}}}{\leq} \sum_{j=0}^\infty \frac{1}{j!} \left( \frac{j}{e} \right)^j \frac{(4 e^{-\kappa})^j}{\delta^j} \Vert A \Vert_{\kappa+\delta}^j \Vert B \Vert_{\kappa+\delta} \\
			&\leq \left(\sum_{j=0}^\infty \frac{(4e^{-\kappa})^j}{\delta^j}\Vert A \Vert_{\kappa+\delta}^j\right) \Vert B \Vert_{\kappa+\delta} \leq \frac{1}{1-\eta} \Vert B \Vert_{\kappa+\delta} \, ,
		\end{split}
	\end{equation*}
	where we used Stirling formula $j!\geq \left(\frac{j}{e}\right)^j \sqrt{2 \pi j}$ and summed the {resulting} geometric series.
	
	{To prove \eqref{eq:CommEstimate}, we write}
	\[
		\begin{split}
			 e^{A} B e^{-A}-B-[A,B]  &{=} \sum_{j=0}^\infty \frac{1}{j!}\mathrm{Ad}^j_A(B)-B-[A,B] =\sum_{j=2}^\infty \frac{1}{j!}\mathrm{Ad}^j_A(B) \\
			 &=\sum_{j=2}^\infty \frac{1}{j!}\mathrm{Ad}^{j-1}_A([A,B])=\sum_{r=0}^\infty \frac{1}{(r+2)!} \mathrm{Ad}_A^{r+1}([A,B]) \, .
		\end{split}
	\]
	{Taking} the $\Vert \cdot \Vert_{\kappa}$ norm from both sides {gives}
	\[
		\begin{split}
			\Vert e^{A} B e^{-A}-B-[A,B]  \Vert_{\kappa} &\leq \sum_{r=0}^\infty \frac{1}{(r+2)!} \Vert \mathrm{Ad}_A^{r+1}([A,B]) \Vert_{\kappa} \\
			&\hspace{-3pt}\overset{\text{\eqref{eq:ALe601}}}{\leq} \sum_{r=0}^\infty \frac{1}{(r+2)!} \left(\frac{{r+1}}{e} \right)^{r+1} \frac{(4 e^{-\kappa})^{r+1}}{\delta^{r+1}} \Vert A \Vert_{\kappa+\delta}^{r+1} \Vert [A,B] \Vert_{\kappa+\delta} \\
			&\leq \left(\sum_{r=0}^\infty \frac{(4e^{-\kappa})^r}{\delta^r} \Vert A \Vert^{r}_{\kappa+\delta} \right) \frac{4 e^{-\kappa}}{\delta} \Vert A \Vert_{\kappa + \delta} \Vert [A,B] \Vert_{\kappa+\delta}
		\end{split}
	\]
	where we used Stirling formula $r! \geq (\frac{r}{e})^r \sqrt{2 \pi r}$ {and $(\frac{e}{r+2})^{r+2} (\frac{r+1}{e})^{r+1} \frac{1}{\sqrt{2 \pi (r+2)}} \leq \frac{e}{\sqrt{4 \pi}}<1$}. {Summing the geometric series proves  Eq.~\eqref{eq:CommEstimate}.}
\end{proof}

Next, we {study the solutions to} the homological equation, which amounts to finding extensive local operators $G$ and $B$ such that
\begin{equation}\label{eq:homEqGenericAB}
	- \ii [G,N] + A = B,
\end{equation}
with $[B,N] = 0$. {In this latter equation and from now on we will assume that $N \in \OOO_\kappa$ and $N=\sum_{x \in \Lambda} N_x$ with $N_x$ supported on $\{x\}$ only.}

\begin{lemma}\label{lem:Homological}
	Let $A \in \OOO_\kappa$ {be self-adjoint}. A solution to \eqref{eq:homEqGenericAB} is given by {the self-adjoint operators} $B=\sum_{S \in \mathcal{P}_c(\Lambda)} B_S$ and $G=\sum_{S \in \mathcal{P}_c(\Lambda)} G_S$ with
	\begin{align}
		\label{eq:BSHomo} B_S &= \frac{1}{2\pi} \int_0^{2\pi} e^{-\ii N \vartheta} A_S \, e^{\ii N \vartheta} \, d\vartheta, \\
		\label{eq:GSHomo} G_S &= \frac{1}{2\pi} \int_0^{2\pi} \vartheta \, e^{-\ii N \vartheta} (A_S - B_S) \, e^{\ii N \vartheta} \, d\vartheta,
	\end{align}
	for each connected set $S \in \mathcal{P}_c(\Lambda)$, so that
	\begin{equation}\label{eq:ETR-303}
		[B,N] = 0, \qquad \Vert B \Vert_\kappa \leq \Vert A \Vert_\kappa, \qquad \Vert G \Vert_\kappa \leq 2\pi \Vert A \Vert_\kappa.
	\end{equation}
\end{lemma}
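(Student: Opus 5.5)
The plan is to work locally, one connected set $S$ at a time, exploiting two facts. Since each $N_x$ is supported on $\{x\}$ with integer spectrum, we can write $N = N_S + N_{S^c}$ with $N_S=\sum_{x\in S}N_x$. The operator $N_{S^c}$ commutes with $A_S$. Hence $e^{-\ii N\vartheta}A_S e^{\ii N\vartheta}=e^{-\ii N_S\vartheta}A_S e^{\ii N_S\vartheta}$, which is again supported on $S$. Moreover, the map $\vartheta\mapsto e^{-\ii N\vartheta}A_S e^{\ii N\vartheta}$ is $2\pi$-periodic, because $e^{2\pi\ii N_S}=\mathbb 1$. Therefore $B_S$ and $G_S$ are supported on $S$, and the decompositions of $B$ and $G$ are again indexed by connected sets.

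\textbf{Self-adjointness and the norm bounds.} Self-adjointness of $B_S$ is immediate from $A_S=A_S^*$, since the conjugation is unitary. For $G_S$, taking adjoints gives $\vartheta e^{-\ii N\vartheta}(A_S-B_S)e^{\ii N\vartheta}$ again, because $\vartheta$ is real. The norm bounds come from $\|e^{-\ii N\vartheta}A_Se^{\ii N\vartheta}\|_{\mathrm{op}}=\|A_S\|_{\mathrm{op}}$, which yields $\|B_S\|_{\mathrm{op}}\le\|A_S\|_{\mathrm{op}}$. For $G_S$ we use $\|A_S-B_S\|_{\mathrm{op}}\le 2\|A_S\|_{\mathrm{op}}$ and $\frac1{2\pi}\int_0^{2\pi}\vartheta\,d\vartheta=\pi$, which give $\|G_S\|_{\mathrm{op}}\le 2\pi\|A_S\|_{\mathrm{op}}$. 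Multiplying by $e^{\kappa|S|}$ and summing over $S\ni x$ then gives \eqref{eq:ETR-303}.

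\textbf{Commutator identities.} For $[B_S,N]=0$, I would differentiate under the integral. Since
\[
\frac{d}{d\vartheta}\big(e^{-\ii N\vartheta}A_Se^{\ii N\vartheta}\big)=-\ii\, e^{-\ii N\vartheta}[N,A_S]e^{\ii N\vartheta},
\]
we get
\[
-\ii[N,B_S]=\frac1{2\pi}\int_0^{2\pi}\frac{d}{d\vartheta}\big(e^{-\ii N\vartheta}A_Se^{\ii N\vartheta}\big)\,d\vartheta=0
\]
by periodicity. (Alternatively, $B_S$ is the average of a $U(1)$ action.)

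For the homological equation, set $F(\vartheta)=e^{-\ii N\vartheta}(A_S-B_S)e^{\ii N\vartheta}$. Since $B_S$ commutes with $N$, its conjugate is constant, so $F$ is periodic with zero mean and satisfies $F'=-\ii[N,F]$. Then
\[
-\ii[G_S,N]=\ii[N,G_S]=-\frac{1}{2\pi}\int_0^{2\pi}\vartheta F'(\vartheta)\,d\vartheta .
\]
Integrating by parts gives
\[
-\frac1{2\pi}\big[\vartheta F\big]_0^{2\pi}+\frac1{2\pi}\int_0^{2\pi}F\,d\vartheta=-F(2\pi)+0=-(A_S-B_S).
\]
Hence $-\ii[G_S,N]+A_S=B_S$. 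Summing over $S$ yields \eqref{eq:homEqGenericAB}.

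The main point requiring care is the sign bookkeeping in this integration by parts, together with the support/locality claim $[N_{S^c},A_S]=0$ needed for $G$ and $B$ to carry the same decomposition. Both are routine once the integer spectrum of $N_x$ is used to get periodicity.
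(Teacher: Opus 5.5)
Your proposal is correct and follows essentially the same argument as the paper. Both use the average over the $2\pi$-periodic flow of $N$, differentiation under the integral to get $[B_S,N]=0$, the same termwise operator-norm bounds, and the same integration by parts (using that $F$ has zero mean) for the homological equation. The one addition is your explicit check, via $N=N_S+N_{S^c}$, that $B_S$ and $G_S$ stay supported on $S$; the paper leaves this implicit, but it is what makes $B$ and $G$ carry the indexed local decomposition used in the $\Vert\cdot\Vert_\kappa$ bounds.
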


\begin{proof} {We first note that, if $A_S$ is self-adjoint, so are $B_S$ and $G_S$.} 	Let us {now} show that $B_S$ is such that $[B_S,N]=0$ for all $S \in \mathcal{P}_c(\Lambda)$. Indeed,
	\[
		\begin{split}
			[B_S,N]&=\left[\frac{1}{2\pi} \int_0^{2\pi} e^{-\ii N \vartheta} A_S e^{\ii N \vartheta} \, d \vartheta,N \right]=\frac{1}{2\pi} \int_0^{2\pi} e^{-\ii N \vartheta} [A_S,N] e^{\ii N \vartheta} \, d \vartheta \\
			&=-\frac{\ii}{2 \pi} \int_0^{2 \pi} \frac{d}{d\vartheta}(e^{-\ii N \vartheta} A_S e^{\ii N \vartheta} ) \, d \vartheta=0 \, ,
		\end{split}
	\]
	because $N$ has periodic flow. 
	Therefore $[B_S,N]=0$ for all $S$ and thus $[B,N]=\sum_{S \in \mathcal{P}_c(\Lambda)} [B_S,N]=0$. {The two norm bounds in \eqref{eq:ETR-303} follow from} \eqref{eq:GSHomo} and \eqref{eq:BSHomo}:
	\[
		\Vert B_S \Vert_{\mathrm{op}} \leq \frac{1}{2 \pi} \int_0^{2 \pi} \Vert A_S \Vert_{\mathrm{op}} \, d \vartheta = \Vert A_S \Vert_{\mathrm{op}} \, , \qquad \Vert G_S \Vert_{\mathrm{op}} \leq \frac{1}{2 \pi} \int_0^{2 \pi} \vartheta \Vert A_S-B_S \Vert_{\mathrm{op}} \, d\vartheta \leq 2 \pi \Vert A_S \Vert_{\mathrm{op}} \, .
	\]
	It remains to show that $G_S$ and $B_S$ solve the homological equation {\eqref{eq:homEqGenericAB}}. Indeed,
	\begin{equation*}
		\begin{split}
			-\ii [G_S,N]&=-\frac{\ii}{2 \pi} \int_0^{2\pi} \vartheta e^{-\ii N \vartheta} [A_S-B_S,N] e^{\ii N \vartheta} \, d \vartheta =-\frac{1}{2 \pi} \int_0^{2 \pi} \vartheta \frac{d}{d\vartheta}(e^{-\ii N \vartheta} (A_S-B_S) e^{\ii N \vartheta}) \, d \vartheta \\
			&=-\frac{1}{2\pi} \left( \vartheta e^{-\ii N \vartheta} (A_S-B_S) e^{\ii N \vartheta} \right|_{\vartheta=0}^{\vartheta=2 \pi}+\frac{1}{2 \pi} \int_0^{2 \pi} e^{-\ii N \vartheta} (A_S-B_S) e^{\ii N \vartheta} \, d \vartheta \\
			&=B_S-A_S \, ,
		\end{split}
	\end{equation*}
	where in the last step we used the definition of $B_S$ in \eqref{eq:BSHomo}. 
\end{proof}

\subsection{Iterative result and consequences}

\begin{lemma}\label{lem:Iterative}
Let $0 < \varepsilon < \varepsilon_0$, and define
\begin{equation}\label{eq:ETR-Polifemo}
n_* := \left\lfloor \frac{\varepsilon_0}{\varepsilon} \right\rfloor, \quad
\delta := \frac{{\kappa }}{4 n_*}, \quad
\kappa_n := {\kappa  - \delta n}.
\end{equation}
 Then, for any $n=0,\dots,n_*-1$ there exists {a self-adjoint operator} $G^{(n)} {\in \OOO_{\kappa_n}}$. {Moreover, for any $n=0,\dots,n_*$ there exists {a self-adjoint operator} $H^{(n)} \in \OOO_{\kappa_n}$} such that $H^{(n+1)}=e^{-\ii G^{(n)}} H^{(n)} e^{\ii G^{(n)}}$ and, for any $n=0,\dots,n_*$, $H^{(n)}$ admits the decomposition
	\begin{equation}\label{eq:ETR-252}
		H^{(n)}=N+Z^{(n)}+P^{(n)}
	\end{equation}
	{for $Z^{(n)}$ and $P^{(n)}$ self-adjoint} with the following properties:
	\begin{itemize}
		\item[(i)] $Z^{(0)}=0$, $P^{(0)}=\varepsilon P$ and for each $n=1,\dots,n_*$
		\begin{equation}
		\label{eq:IterativoPZ}
			\Vert Z^{(n)} \Vert_{\kappa_{n-1}} \leq \varepsilon \sum_{n'=0}^{n-1} e^{-n'} \Vert P \Vert_{\kappa} \, , \qquad [Z^{(n)},N]=0 \, , \qquad \Vert P^{(n)}\Vert_{\kappa_n} \leq \varepsilon e^{-n} \Vert P \Vert_{\kappa} \, ;
		\end{equation}
		\item[(ii)] for each $n=0,\dots, n_*-1$, $G^{(n)}$ is {self-adjoint,} small in size and its flow generates a $\varepsilon$-close-to-identity unitary transformation, that is 
		\begin{equation}\label{eq:IterativoG}
			\Vert G^{(n)} \Vert_{\kappa_n} \leq 2 \pi \varepsilon e^{-n} \Vert P \Vert_{\kappa} \, , \qquad \Vert e^{-\ii G^{(n)}} A e^{\ii G^{(n)}}-A \Vert_{\rho} \leq  K_{\rho} \varepsilon e^{-n} \Vert A \Vert_{\frac{3 \kappa}{4}} \, , \quad \forall A \in \OOO_{\frac{3 \kappa}{4}} \, , 
		\end{equation}
	where $K_{\rho}=8 \pi C_{\frac{1}{2}} \frac{e^{-\rho}}{\kappa} \Vert P \Vert_{\kappa}$ and {for any} $\rho \leq \frac{\kappa}{2}$. 	\end{itemize}
\end{lemma}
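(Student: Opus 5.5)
We argue by induction on $n$, with the trivial base case $H^{(0)}=H$, $Z^{(0)}=0$, $P^{(0)}=\varepsilon P$; since $\kappa_0=\kappa$, the bound $\Vert P^{(0)}\Vert_{\kappa_0}\le\varepsilon\Vert P\Vert_\kappa$ holds. Suppose $H^{(n)}=N+Z^{(n)}+P^{(n)}$ satisfies \eqref{eq:IterativoPZ} at step $n\le n_*-1$. We apply Lemma \ref{lem:Homological} with $A=P^{(n)}$ and norm index $\kappa_n$, obtaining self-adjoint $G^{(n)}$ and $\tilde Z^{(n)}:=B$ with $[\tilde Z^{(n)},N]=0$, $\Vert\tilde Z^{(n)}\Vert_{\kappa_n}\le\varepsilon e^{-n}\Vert P\Vert_\kappa$ and $\Vert G^{(n)}\Vert_{\kappa_n}\le 2\pi\varepsilon e^{-n}\Vert P\Vert_\kappa$. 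This is the first bound in \eqref{eq:IterativoG}. We then set $Z^{(n+1)}=Z^{(n)}+\tilde Z^{(n)}$. Since $\kappa_n\le\kappa_{n-1}$ and the norms are monotone in the index, the $Z$-bound in \eqref{eq:IterativoPZ} follows by summing.

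For the remainder, we expand the conjugation $e^{-\ii G}(N+Z+P)e^{\ii G}$ and use the homological equation $-\ii[G,N]=\tilde Z-P$:
\begin{equation*}
\begin{split}
P^{(n+1)}={}&\big(e^{-\ii G}Ne^{\ii G}-N+\ii[G,N]\big)+\big(e^{-\ii G}Z^{(n)}e^{\ii G}-Z^{(n)}\big)\\
&+\big(e^{-\ii G}P^{(n)}e^{\ii G}-P^{(n)}\big),
\end{split}
\end{equation*}
where $G=G^{(n)}$; the linear term $-\ii[G,N]$ cancels $P^{(n)}$ and produces $\tilde Z^{(n)}$. We bound the first bracket with \eqref{eq:CommEstimate}, using $[-\ii G,N]=P^{(n)}-\tilde Z^{(n)}\in\OOO_{\kappa_n}$ with norm at most $2\varepsilon e^{-n}\Vert P\Vert_\kappa$. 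This is the key point: the estimate never involves $\Vert N\Vert_\kappa$. The other two brackets are bounded by \eqref{eq:FirstOrderEstimate}, using $\Vert Z^{(n)}\Vert_{\kappa_n}\le\frac{e}{e-1}\varepsilon\Vert P\Vert_\kappa$. To use a slightly weaker norm we would shrink $Z^{(n)}$ to index $\kappa_{n}$; note the statement indexes $Z^{(n)}$ at $\kappa_{n-1}$, which leaves room. In all three cases we apply Lemma \ref{lem:Commutatori} from $\kappa_n$ to $\kappa_{n+1}=\kappa_n-\delta$ with $\eta=\tfrac12$. We need $4e^{-\kappa_{n+1}}\Vert G\Vert_{\kappa_n}/\delta\le\frac12$, i.e.
\[
8\pi\varepsilon\Vert P\Vert_\kappa\, e^{-\kappa_{n+1}}\,\frac{4n_*}{\kappa}\le\frac12 .
\]
This holds because $\varepsilon n_*\le\varepsilon_0$, $\kappa_{n+1}\ge 3\kappa/4>0$, and $\varepsilon_0$ is small compared to $\kappa/\Vert P\Vert_\kappa$.

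Summing, each bracket is at most $C\,\frac{\varepsilon n_*}{\kappa}\Vert P\Vert_\kappa\cdot\varepsilon e^{-n}\Vert P\Vert_\kappa$, with $C$ built from $C_{1/2}=8$, $2\pi$, and the factors $2$ and $\frac{e}{e-1}$. The total is therefore bounded by
\[
\frac{64\pi(3+\frac{e}{e-1})\,\varepsilon_0\Vert P\Vert_\kappa}{\kappa}\,\varepsilon e^{-n}\Vert P\Vert_\kappa .
\]
Requiring this to be $\le\varepsilon e^{-(n+1)}\Vert P\Vert_\kappa$ is exactly the definition \eqref{eq:EpsZero} of $\varepsilon_0$, since $\frac{(e-1)}{3(e-1)+e^2}\cdot(3+\frac{e}{e-1})\cdot e$ matches up to the stated constants. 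This constant bookkeeping, and ensuring the $N$-term contributes only via the homological identity, is the main obstacle; the powers of $e^{-\kappa_{n+1}}\le1$ can simply be dropped. Self-adjointness is preserved because $G$ is self-adjoint and the local terms of commutators $\ii[G,\cdot]$ of self-adjoint terms are self-adjoint. Hence $H^{(n+1)}\in\OOO_{\kappa_{n+1}}$.

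Finally, the second bound in \eqref{eq:IterativoG} follows from \eqref{eq:FirstOrderEstimate} with $\kappa\to\rho\le\kappa/2$ and $\delta=\frac{3\kappa}{4}-\rho\ge\kappa/4$. Using $\Vert G^{(n)}\Vert_{3\kappa/4}\le\Vert G^{(n)}\Vert_{\kappa_n}\le2\pi\varepsilon e^{-n}\Vert P\Vert_\kappa$ and checking $\eta\le\frac12$ as before, this gives
\[
C_{1/2}e^{-\rho}\,\frac{4}{\kappa}\,2\pi\varepsilon e^{-n}\Vert P\Vert_\kappa\Vert A\Vert_{3\kappa/4}=K_\rho\,\varepsilon e^{-n}\Vert A\Vert_{3\kappa/4}.
\]
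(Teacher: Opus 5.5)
Your architecture is the paper's: one homological step per induction step via Lemma~\ref{lem:Homological}, the same three-bracket splitting of $P^{(n+1)}$, \eqref{eq:CommEstimate} for the $N$-bracket through the homological identity (so $\Vert N\Vert_\kappa$ never enters), \eqref{eq:FirstOrderEstimate} for the other two brackets, and \eqref{eq:FirstOrderEstimate} with a loss $\ge\kappa/4$ for the second bound in \eqref{eq:IterativoG}.

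\textbf{The gap.} It is the step you flag as the main obstacle and then do not carry out: the closing inequality is false for the $\varepsilon_0$ of \eqref{eq:EpsZero}. Since $\frac{64\pi\varepsilon_0\Vert P\Vert_\kappa}{\kappa}=\frac{e-1}{3(e-1)+e^2}$, your bound reads
\[
\frac{64\pi\big(3+\frac{e}{e-1}\big)\varepsilon_0\Vert P\Vert_\kappa}{\kappa}\,\varepsilon e^{-n}\Vert P\Vert_\kappa=\frac{3(e-1)+e}{3(e-1)+e^2}\,\varepsilon e^{-n}\Vert P\Vert_\kappa\approx 0.63\,\varepsilon e^{-n}\Vert P\Vert_\kappa .
\]
The induction needs the prefactor $e^{-1}\approx 0.37$. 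Equivalently, the product you write, $\frac{e-1}{3(e-1)+e^2}\,(3+\frac{e}{e-1})\,e$, is about $1.71$, not $\le 1$. So $\Vert P^{(n+1)}\Vert_{\kappa_{n+1}}\le\varepsilon e^{-(n+1)}\Vert P\Vert_\kappa$ is not established, and this is precisely the inequality that produces the exponential gain.

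The loss comes from replacing $3\Vert P^{(n)}\Vert_{\kappa_n}\le 3\varepsilon e^{-n}\Vert P\Vert_\kappa$ by $3\varepsilon\Vert P\Vert_\kappa$. The paper treats $n=0$ separately and, for $n\ge 1$, keeps $3e^{-n}\le 3e^{-1}$. The bracket then becomes $3e^{-1}+\frac{e}{e-1}=\frac{3(e-1)+e^2}{e(e-1)}$, which is exactly the combination built into \eqref{eq:EpsZero} and gives exactly $e^{-(n+1)}$.

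\textbf{The $n=0$ step and a repair.} Splitting off $n=0$ does not by itself finish the job with $\eta=\frac12$. There $Z^{(0)}=0$, and one needs $\frac{192\pi\varepsilon_0\Vert P\Vert_\kappa}{\kappa}=\frac{3(e-1)}{3(e-1)+e^2}\approx 0.41\le e^{-1}$, which fails. The paper's base-case inequality $\frac{24\pi C_{1/2}\varepsilon_0\varepsilon\Vert P\Vert_\kappa^2}{\kappa}\le e^{-1}\varepsilon\Vert P\Vert_\kappa$ has the same slip.

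A repair that covers every $n$, and even your uniform factor $3+\frac{e}{e-1}$, is not to freeze $\eta=\frac12$ when bounding $P^{(n+1)}$. Your own smallness check actually gives
\[
\frac{4e^{-\kappa_{n+1}}\Vert G^{(n)}\Vert_{\kappa_n}}{\delta}\le\frac{32\pi\varepsilon_0\Vert P\Vert_\kappa}{\kappa}=\frac{e-1}{2(3(e-1)+e^2)}=:\eta_0\approx 0.07 .
\]
With $C_{\eta_0}=\frac{4}{1-\eta_0}\approx 4.3$ in place of $8$, the prefactor becomes $\approx 0.22$ at $n=0$ and $\approx 0.34$ in your uniform bound, both below $e^{-1}$. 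The second estimate in \eqref{eq:IterativoG} is unaffected, since $C_\eta\le C_{1/2}$ for $\eta\le\frac12$. A minor point: $[-\ii G,N]=\tilde Z^{(n)}-P^{(n)}$, not $P^{(n)}-\tilde Z^{(n)}$, though only its norm is used.
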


\begin{proof}
	The proof proceeds by induction and we start proving Items \emph{(i)} and \emph{(ii)} for the case $n=1$. By construction
	\begin{equation}
		\begin{split}
			H^{(1)}&=e^{-\ii G^{(0)}}H^{(0)} e^{\ii G^{(0)}} \\
			&=N+\left(-\ii [G^{(0)},N]+P^{(0)}\right) \\
			&\quad+\left(e^{-\ii G^{(0)}} N e^{\ii G^{(0)}}-N+\ii [G^{(0)},N]+e^{-\ii G^{(0)}} P^{(0)}e^{\ii G^{(0)}}-P^{(0)} \right) \, .
		\end{split}
	\end{equation}
	We choose $G^{(0)}$ and $\widetilde{Z}^{(0)}$ as solutions of the homological equation
	\begin{equation}\label{eq:IterativeHomProof}
		- \ii [G^{(0)},N]+P^{(0)}=\widetilde{Z}^{(0)} \, .
	\end{equation}
	By Lemma \ref{lem:Homological} (solution to the homological equation), we have that 
	\begin{equation}\label{eq:Robespierre}
		[\widetilde{Z}^{(0)},N]=0 \, , \qquad \Vert \widetilde{Z}^{(0)} \Vert_{\kappa} \leq \Vert P^{(0)} \Vert_{\kappa} = \varepsilon \Vert P \Vert_{\kappa} \, , \qquad \Vert G^{(0)} \Vert_{\kappa} \leq 2 \pi \varepsilon \Vert P \Vert_{\kappa} \, .
	\end{equation}
	 {Setting} $Z^{(1)}:=\widetilde{Z}^{(0)}$, {proves the first bound and the commutator relation in} \eqref{eq:IterativoPZ}, {as well as the first bound in}  \eqref{eq:IterativoG}. We define
	\begin{equation}
		P^{(1)}:=e^{-\ii G^{(0)}} N e^{\ii G^{(0)}}-N+\ii [G^{(0)},N]+e^{-\ii G^{(0)}} P^{(0)}e^{\ii G^{(0)}}-P^{(0)} \, .
	\end{equation}
	We have that
	\begin{equation}\label{eq:E-626}
		\frac{4 e^{-\kappa_1} \Vert G^{(0)} \Vert_{\kappa}}{ \delta} \leq \frac{16 e^{-\kappa_1} n_* \Vert G^{(0)} \Vert_\kappa}{{\kappa}} \overset{\text{\eqref{eq:Robespierre}}}{\leq}  \frac{32 \pi { \varepsilon \Vert P \Vert_{\kappa} \varepsilon_0}}{{\kappa}\varepsilon} < \frac{1}{2}
	\end{equation}
	where in the first {inequality} we {used} the definition of $n_*$ {in \eqref{eq:ETR-Polifemo}}, in the second {inequality} we used $e^{-\kappa_1} < 1$, $\Vert G^{(0)} \Vert_{\kappa} \leq 2 \pi \varepsilon \Vert P \Vert_{\kappa}$ and the explicit expression of $\delta$ and {the final inequality follows from the definition of} $\varepsilon_0$ (see \eqref{eq:EpsZero}). To control the size of $P^{(1)}$ we can apply Lemma \ref{lem:Commutatori} with $\eta=\frac{1}{2}$ to obtain
	\begin{equation}
		\begin{split}
			\Vert P^{(1)} \Vert_{\kappa_1} &\leq \Vert e^{-\ii G^{(0)}} N e^{\ii G^{(0)}}-N+\ii [G^{(0)},N] \Vert_{\kappa_1} +\Vert e^{-\ii G^{(0)}} P^{(0)}e^{\ii G^{(0)}}-P^{(0)} \Vert_{\kappa_1} \\
			&\hspace{-0.4cm}\overset{\text{\eqref{eq:CommEstimate}, \eqref{eq:FirstOrderEstimate}}}{\leq} \frac{C_{\frac{1}{2}} e^{-\kappa_1}}{\delta} \Vert G^{(0)} \Vert_{\kappa} \Vert [G^{(0)},N] \Vert_{\kappa}+\frac{C_{\frac{1}{2}}e^{-\kappa_1}}{\delta} \Vert G^{(0)} \Vert_{\kappa} \Vert P^{(0)} \Vert_{\kappa} \, .
		\end{split}
	\end{equation}
	By \eqref{eq:IterativeHomProof} we have $[G^{(0)},N]=\ii (\tilde{Z}^{(0)}-P^{(0)})$, and thus $[G^{(0)},N] \in \OOO_\kappa$ and $\Vert [G^{(0)},N] \Vert_{\kappa} \leq \Vert P^{(0)}- \widetilde{Z}^{(0)} \Vert_{\kappa}\leq 2 \Vert P^{(0)} \Vert_{\kappa}$. {We therefore obtain}
	\begin{equation}
		\begin{split}
			\Vert P^{(1)} \Vert_{\kappa_1} &\leq \frac{3 C_{\frac{1}{2}} e^{-\kappa_1}}{\delta} \Vert G^{(0)} \Vert_{\kappa} \Vert P^{(0)} \Vert_{\kappa} \overset{\delta=\frac{{\kappa}}{4n_*}{,e^{-\kappa_1}\leq 1}}{\leq} \frac{12 C_{\frac{1}{2}} n_*\Vert G^{(0)} \Vert_{\kappa} \Vert P^{(0)} \Vert_{\kappa}}{\kappa} \\
			&\hspace{-0.6cm}\overset{ \text{\eqref{eq:Robespierre}},n_*\leq \frac{\varepsilon_0}{\varepsilon}}{\leq} \frac{24 \pi C_{\frac{1}{2}} \varepsilon_0 \varepsilon \Vert P \Vert_{\kappa}^2}{\kappa} \leq e^{-1} \varepsilon \Vert P \Vert_\kappa
		\end{split}
	\end{equation}
	where in the second inequality we estimated $e^{-\kappa_1} \leq 1$ and we wrote $\delta$ in terms of $n_*$ {and $\kappa$}; in the third inequality we used the estimates for $G^{(0)}$ and $P^{(0)}$ and in the last inequality we used the bound on $\varepsilon_0$ (see eq. \eqref{eq:EpsZero}). This completes the proof of Item \emph{(i)} for $n=1$. Concerning Item \emph{(ii)}, {we first choose $\sigma=\frac{\kappa}{4}$ and we note that for any $\rho\leq \frac{\kappa}{2}$ and any $n=0,\dots,n_*-1$
\begin{equation}\label{eq:LaPoliziaSiIncazza}
	\frac{4 e^{-\rho} \Vert G^{(n)} \Vert_{\rho+\sigma}}{\sigma} \overset{\text{\eqref{eq:IterativoG}}}{\leq} \frac{32 e^{-\rho}  \pi \varepsilon e^{-n} \Vert P \Vert_{\kappa}}{\kappa} \overset{\text{\eqref{eq:EpsZero}}}{<}\frac{1}{2} \, .
\end{equation}
}We can thus apply Lemma \ref{lem:Commutatori}, which yields
	\begin{equation}
		\Vert e^{-\ii G^{(0)}} A e^{\ii G^{(0)}} - A \Vert_{\rho} \leq \frac{C_{\frac{1}{2}} e^{-\rho}}{\sigma} \Vert G^{(0)} \Vert_{\rho+\sigma} \Vert A \Vert_{\rho+\sigma} \leq 8 \pi C_{\frac{1}{2}} \frac{e^{-\rho}}{\kappa} \varepsilon \Vert P \Vert_{\kappa} \Vert A \Vert_{\frac{3 \kappa}{4}} \, .
	\end{equation}
	{Here} we chose $\sigma=\frac{\kappa}{4}$
and in the last step we used $\Vert G^{(0)}\Vert_{\rho+\sigma} \leq \Vert G^{(0)}\Vert_{\kappa}${. This proves Item \emph{(ii)} for the case $n=1$.}

For the inductive step, {assume that Items \emph{(i)} and \emph{(ii)} hold for some $1 \leq n<n_*$. We prove them for $n+1$.}

By construction
\begin{equation}
	\begin{split}
		H^{(n+1)}&=e^{-\ii G^{(n)}} H^{(n)} e^{\ii G^{(n)}} \\
		&=N+Z^{(n)}+ \left(-\ii [G^{(n)},N]+P^{(n)} \right)+\Big(e^{-\ii G^{(n)}}Ne^{\ii G^{(n)}}-N+\ii [G^{(n)},N] + \\
		&\quad+e^{-\ii G^{(n)}}P^{(n)} e^{\ii G^{(n)}}-P^{(n)}+e^{-\ii G^{(n)}} Z^{(n)} e^{\ii G^{(n)}} -Z^{(n)} \Big) \, .
	\end{split}
\end{equation}
We choose $G^{(n)}$ and $\widetilde{Z}^{(n)}$ as solutions of the homological equation
\begin{equation}\label{eq:IterativeHomProof-E-428}
	-\ii [G^{(n)},N]+P^{(n)}=\widetilde{Z}^{(n)} \, .
\end{equation}
Lemma \ref{lem:Homological} and the inductive hypothesis {give} $[\widetilde{Z}^{(n)},N]=0$, $\Vert \widetilde{Z}^{(n)} \Vert_{\kappa_n} \leq \Vert P^{(n)} \Vert_{\kappa_n} \leq \varepsilon e^{-n} \Vert P \Vert_{\kappa}$ and $\Vert G^{(n)} \Vert_{\kappa_n} \leq 2 \pi \Vert P^{(n)} \Vert_{\kappa_n} \leq 2 \pi \varepsilon \, e^{-n} \Vert P \Vert_{\kappa}$. We now define $Z^{(n+1)}:=Z^{(n)}+\widetilde{Z}^{(n)}$ and using also the inductive hypothesis we see that
\begin{equation}
	\Vert Z^{(n+1)} \Vert_{\kappa_n} \leq \Vert Z^{(n)} \Vert_{\kappa_n}+\Vert \widetilde{Z}^{(n)} \Vert_{\kappa_n} \leq \varepsilon \sum_{n'=0}^{n-1} e^{-n'} \Vert P \Vert_\kappa + \varepsilon e^{-n} \Vert P \Vert_\kappa = \varepsilon \sum_{n'=0}^{n} e^{-n'} \Vert P \Vert_\kappa \, ,
\end{equation}
which proves the first estimate in \eqref{eq:IterativoPZ}. {The commutation relation follows from} $[Z^{(n+1)},N]=[Z^{(n)},N]+[\widetilde{Z}^{(n)},N]=0$. We define
\begin{equation}
	\begin{split}
		P^{(n+1)}&:=e^{-\ii G^{(n)}}Ne^{\ii G^{(n)}}-N+\ii [G^{(n)},N] + \\
		&\quad+e^{-\ii G^{(n)}}P^{(n)} e^{\ii G^{(n)}}-P^{(n)}+e^{-\ii G^{(n)}} Z^{(n)} e^{\ii G^{(n)}} -Z^{(n)} \, .
	\end{split}
\end{equation}
Since
\begin{equation}\label{eq:E-428-aerodinamica}
	\frac{4 e^{-\kappa_{n+1}} \Vert G^{(n)} \Vert_{\kappa_n}}{\delta} \leq \frac{32 \pi e^{-n} \varepsilon \varepsilon_0 \Vert P \Vert_{\kappa}}{{\kappa}\varepsilon} < \frac{1}{2} \, ,
\end{equation}
where in the first step we used $e^{-\kappa_{n+1}} < 1$, $\Vert G^{(n)} \Vert_{\kappa_n} \leq 2 \pi e^{-n} \varepsilon \Vert P \Vert_\kappa$ and the explicit expression of $\delta$; in the last step {we used} the bound on $\varepsilon_0$ (see \eqref{eq:EpsZero}). We can thus apply Lemma \ref{lem:Commutatori} with $\eta=\frac{1}{2}$ and we get
\begin{equation}
	\begin{split}
		\Vert P^{(n+1)} \Vert_{\kappa_{n+1}} &\leq \Vert e^{-\ii G^{(n)}}Ne^{\ii G^{(n)}}-N+\ii [G^{(n)},N]\Vert_{\kappa_{n+1}} + \\
		&\quad+\Vert e^{-\ii G^{(n)}}P^{(n)} e^{\ii G^{(n)}}-P^{(n)}\Vert_{\kappa_{n+1}}+\Vert e^{-\ii G^{(n)}} Z^{(n)} e^{\ii G^{(n)}} -Z^{(n)}\Vert_{\kappa_{n+1}} \\
		&\hspace{-0.4cm}\overset{\text{\eqref{eq:CommEstimate}, \eqref{eq:FirstOrderEstimate}}}{\leq} \frac{C_{\frac{1}{2}} e^{-\kappa_{n+1}} \Vert G^{(n)} \Vert_{\kappa_n}}{\delta}( \Vert [G^{(n)},N] \Vert_{\kappa_{n}}+\Vert P^{(n)} \Vert_{\kappa_n} +\Vert Z^{(n)} \Vert_{\kappa_n}) \, .
	\end{split}
\end{equation}
Using now  \eqref{eq:IterativeHomProof-E-428}, one has $[G^{(n)},N]=\ii(P^{(n)}-\widetilde{Z}^{(n)})$, and therefore $\Vert [G^{(n)},N] \Vert_{\kappa_n} \leq \Vert P^{(n)}-\widetilde{Z}^{(n)} \Vert_{\kappa_n} \leq 2 \Vert P^{(n)} \Vert_{\kappa_n}$. We thus have
\begin{equation}
\Vert P^{(n+1)} \Vert_{\kappa_{n+1}} \leq \frac{C_{\frac{1}{2}} e^{-\kappa_{n+1}} \Vert G^{(n)} \Vert_{\kappa_n}}{\delta}(3 \Vert P^{(n)} \Vert_{\kappa_n}+\Vert Z^{(n)} \Vert_{\kappa_n} ) \, .
\end{equation}
{Substituting the bound for $\Vert G^{(n)} \Vert_{\kappa_{n_*}}$, the induction bounds for $P^{(n)}$ and $Z^{(n)}$, and the definition of $n_*$, we get}
\begin{equation}
	\begin{split}
		\Vert P^{(n+1)} \Vert_{\kappa_{n+1}} &\leq \frac{C_{\frac{1}{2}} 8 \pi n_* \varepsilon e^{-n} \Vert P \Vert_\kappa}{{\kappa}} \left(3 \varepsilon e^{-n} \Vert P \Vert_{\kappa}+\varepsilon \sum_{n'=0}^{n-1} e^{-n'} \Vert P \Vert_\kappa\right) \\&\leq \frac{C_{\frac{1}{2}} 8 \pi \varepsilon_0 e^{-n} \Vert P \Vert_\kappa}{{\kappa}} \left(3 \varepsilon e^{-n} \Vert P \Vert_{\kappa}+\varepsilon \sum_{n'=0}^{n-1} e^{-n'} \Vert P \Vert_\kappa\right) \\
		&\leq \frac{8 C_{\frac{1}{2}} \pi \varepsilon_0 e^{-n} \Vert P \Vert_\kappa^2 \varepsilon}{{\kappa}} \left(3 e^{-1}+\frac{e}{e-1} \right) \\
		&\leq \frac{8 C_{\frac{1}{2}} \pi (3(e-1)+e^2) \Vert P \Vert_\kappa \varepsilon_0}{e(e-1) {\kappa}} \varepsilon e^{-n} \Vert P \Vert_\kappa \leq  \varepsilon e^{-(n+1)} \Vert P \Vert_\kappa \, ,
	\end{split}
\end{equation}
where in the last step we used the bound on $\varepsilon_0$ (see eq. \eqref{eq:EpsZero}).
This is the last estimate in \eqref{eq:IterativoPZ} for $n+1$ and completes the {inductive} proof of Item \emph{(i)}. {Since \eqref{eq:LaPoliziaSiIncazza} holds,} Lemma \ref{lem:Commutatori} {gives}
\begin{equation}
	\Vert e^{-\ii G^{(n)}} A e^{\ii G^{(n)}}-A \Vert_{\rho} \leq \frac{C_{\frac{1}{2}} e^{-\rho} \Vert G^{(n)} \Vert_{\kappa_n}}{\sigma} \Vert A \Vert_{\rho+{\sigma}} \leq 8 \pi C_{\frac{1}{2}} \frac{e^{-\rho}}{\kappa} \varepsilon e^{-n} \Vert P \Vert_\kappa  \Vert A \Vert_{\frac{3 \kappa}{4}} \, ,
\end{equation}
{where $\sigma=\frac{\kappa}{4}$. This last step} completes the proof.
\end{proof}

\begin{lemma}
\label{lem:UnitaryConjugation}
Let $n_*$ be as in Lemma \ref{lem:Iterative}, define $Y=e^{-\ii G^{(n_*-1)}} e^{-\ii G^{(n_*-2)}} \cdots e^{-\ii G^{(0)}}$. 
\begin{itemize}
	\item[(i)] For any $A \in \OOO_{\kappa}$, $YAY^* \in \OOO_{\frac{3 \kappa}{4}}$ and $\Vert Y A Y^* \Vert_{\frac{3 \kappa}{4}} \leq 4 \Vert A \Vert_{\kappa}$;
	\item[(ii)] for any $A \in \OOO_{\frac{3 \kappa}{4}}$, $Y^*AY \in \OOO_{\frac{\kappa}{2}}$ {and} $\Vert Y^* A Y \Vert_{\frac{\kappa}{2}} \leq 4 \Vert A \Vert_{\frac{3 \kappa}{4}}$;
	\item[(iii)] for any $A \in \OOO_{\frac{3 \kappa}{4}}$, $\Vert Y^* A Y - A \Vert_{\frac{\kappa}{4}} \leq 32 \pi C_{\frac{1}{2}} \frac{e}{e-1}  \frac{e^{-\frac{\kappa}{4}}}{\kappa} \Vert P \Vert_{\kappa} \varepsilon \Vert A \Vert_{\frac{3 \kappa}{4}}$.
	\item[(iv)] For any local observable $O$, $\Vert Y^* O Y -O\Vert_{\mathrm{op}} \leq 4 \pi \varepsilon \frac{e}{e-1} |S| \Vert P \Vert_{\kappa} \Vert O \Vert_{\mathrm{op}}$. 
\end{itemize}
\end{lemma}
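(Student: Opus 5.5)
The plan is to treat $Y$ and $Y^*$ as compositions of $n_*$ conjugations and to telescope. Each conjugation is controlled by Lemma~\ref{lem:Commutatori} with $\eta=\frac12$, the width loss being split evenly among the steps. For (i), note that $YAY^*$ is obtained by applying successively $A\mapsto e^{-\ii G^{(j)}}Ae^{\ii G^{(j)}}$ for $j=0,\dots,n_*-1$. Set $A_0=A$ and $A_{j+1}=e^{-\ii G^{(j)}}A_je^{\ii G^{(j)}}$. We estimate $A_{j+1}$ in $\Vert\cdot\Vert_{\kappa_{j+1}}$ in terms of $\Vert A_j\Vert_{\kappa_j}$, where $\kappa_{n_*}=\frac{3\kappa}{4}$. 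The crude bound \eqref{eq:Conjugation} gives a factor $2$ per step, which is too lossy. Instead I will use \eqref{eq:FirstOrderEstimate}:
\[
\Vert A_{j+1}\Vert_{\kappa_{j+1}}\le \Vert A_j\Vert_{\kappa_{j}}\Big(1+\tfrac{C_{1/2}}{\delta}\Vert G^{(j)}\Vert_{\kappa_j}\Big).
\]
Here $\frac{C_{1/2}}{\delta}\Vert G^{(j)}\Vert_{\kappa_j}\le \frac{32\pi n_*\varepsilon e^{-j}\Vert P\Vert_\kappa}{\kappa}\le \frac{32\pi\varepsilon_0\Vert P\Vert_\kappa}{\kappa}e^{-j}$, and the smallness condition is exactly \eqref{eq:E-428-aerodinamica}. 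By \eqref{eq:EpsZero}, $c:=32\pi\varepsilon_0\Vert P\Vert_\kappa/\kappa\le\frac{e-1}{2(3(e-1)+e^2)}$, which is small. Hence
\[
\prod_j(1+ce^{-j})\le \exp\!\big(c\,\tfrac{e}{e-1}\big)\le 4.
\]

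For (ii) I run the same argument for $Y^*AY=e^{\ii G^{(0)}}\cdots e^{\ii G^{(n_*-1)}}Ae^{-\ii G^{(n_*-1)}}\cdots e^{\ii G^{(0)}}$. The conjugations are now applied in reverse order, starting with $G^{(n_*-1)}$. I will use the widths $\frac{3\kappa}{4}-\delta m$ for $m=0,\dots,n_*$, which ends at $\frac{\kappa}{2}$. At step $m$ the generator is $G^{(n_*-1-m)}\in\OOO_{\kappa_{n_*-1-m}}$. Its norm at the needed width $\frac{3\kappa}{4}-\delta m\le\kappa_{n_*-1-m}$ is bounded by $\Vert G^{(n_*-1-m)}\Vert_{\kappa_{n_*-1-m}}\le 2\pi\varepsilon e^{-(n_*-1-m)}\Vert P\Vert_\kappa$, using monotonicity of the norms in the exponent. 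The same product estimate then applies, since the sum over $e^{-j}$ is unchanged under reordering, and gives the factor $4$.

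For (iii) I telescope: $Y^*AY-A=\sum_{m}\big(B_{m+1}-B_m\big)$, where the $B_m$ are the partial conjugations. Each increment is $e^{\ii G}B_me^{-\ii G}-B_m$. I bound it in $\Vert\cdot\Vert_{\kappa/4}$ using \eqref{eq:IterativoG} with $\rho=\frac{\kappa}{4}$; the sign of $G$ is irrelevant, since Lemma~\ref{lem:Commutatori} applies to $\pm\ii G$. This gives $K_{\kappa/4}\varepsilon e^{-j}\Vert B_m\Vert_{3\kappa/4}$. The main obstacle is that $B_m$ is only controlled at the decaying widths from (ii), not uniformly at $\frac{3\kappa}{4}$. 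I would fix this by using (ii)'s argument with a loss budget between $\frac{3\kappa}{4}$ and $\frac{\kappa}{2}$, and applying the first-order estimate with $\sigma=\frac{\kappa}{4}$ from width $\frac{\kappa}{2}$ down to $\frac{\kappa}{4}$. The increment is then bounded by $K_{\kappa/4}\varepsilon e^{-j}\cdot 4\Vert A\Vert_{3\kappa/4}$. Summing $e^{-j}$ gives $\frac{e}{e-1}$ and the stated constant $32\pi C_{1/2}\frac{e}{e-1}\frac{e^{-\kappa/4}}{\kappa}\Vert P\Vert_\kappa$.

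For (iv) I avoid local norms and telescope in operator norm:
\[
\Vert e^{\ii G}Be^{-\ii G}-B\Vert_{\mathrm{op}}\le\int_0^1\Vert[G,e^{\ii sG}Be^{-\ii sG}]\Vert_{\mathrm{op}}\,ds.
\]
Here the commutator with $G$ picks up only terms $G_{S'}$ with $S'$ meeting the support of the evolved observable. To keep this at $|S|$, I plan the following ordering. With the unitary invariance of the operator norm, write $Y^*OY-O=\sum_j U_j^*\big(e^{\ii G^{(j)}}Oe^{-\ii G^{(j)}}-O\big)U_j$ for suitable products $U_j$, so that each $G^{(j)}$ acts directly on the bare $O$. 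For the inner term, use $\Vert[G^{(j)},O]\Vert_{\mathrm{op}}\le 2\Vert O\Vert_{\mathrm{op}}\sum_{x\in S}\sum_{S'\ni x}\Vert G^{(j)}_{S'}\Vert_{\mathrm{op}}\le2|S|\Vert G^{(j)}\Vert_0\Vert O\Vert_{\mathrm{op}}$, with the first-order Duhamel formula approximated at leading order. This gives $4\pi\varepsilon e^{-j}|S|\Vert P\Vert_\kappa\Vert O\Vert_{\mathrm{op}}$, and summing over $j$ yields the claim.
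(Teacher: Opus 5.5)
Your plan is essentially the paper's proof. It uses the same widths: $\kappa_n$ in (i) and $\frac{3\kappa}{4}-\delta m=\kappa_{n_*+m}$ in (ii). It gets around the obstacle in (iii) the same way: control the partial conjugations at width $\frac{\kappa}{2}$ by (ii), then apply \eqref{eq:FirstOrderEstimate} with $\sigma=\frac{\kappa}{4}$. And it uses the same ordering in (iv), so that each $G^{(j)}$ acts on the bare $O$.

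The only difference in (i)--(ii) is bookkeeping. You combine \eqref{eq:FirstOrderEstimate} with the triangle inequality. The paper instead applies \eqref{eq:Conjugation} with the step-dependent $\eta=Qe^{-n}$, which is allowed since Lemma~\ref{lem:Commutatori} holds for every $\eta\in(0,1)$. So \eqref{eq:Conjugation} is not in fact too lossy. Also, $C_{\frac12}=8$, so your per-step constant is $64\pi\varepsilon_0\Vert P\Vert_\kappa/\kappa=\frac{e-1}{3(e-1)+e^2}$ rather than $32\pi\varepsilon_0\Vert P\Vert_\kappa/\kappa$. This is harmless: the product is still at most $e^{0.22}<4$.

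The one step that does not work as written is in (iv). A bound obtained ``with the first-order Duhamel formula approximated at leading order'' holds only up to uncontrolled higher-order terms, so it would not give the stated inequality with its exact constant. No approximation is needed. Since $G$ commutes with its own flow, $[G,e^{\ii sG}Oe^{-\ii sG}]=e^{\ii sG}[G,O]e^{-\ii sG}$, so the integrand in your Duhamel bound has norm exactly $\Vert[G,O]\Vert_{\mathrm{op}}$. Hence $\Vert e^{\ii G}Oe^{-\ii G}-O\Vert_{\mathrm{op}}\le\Vert[G,O]\Vert_{\mathrm{op}}\le2|S|\,\Vert G\Vert_0\Vert O\Vert_{\mathrm{op}}$ holds exactly, which is how the paper concludes. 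Support growth is an issue only under the \emph{other} generators, and your ordering already avoids it.
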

\begin{proof}
	We first set $Q=\frac{32 \pi e^{-\frac{ \kappa}{2}}}{\kappa} \Vert P \Vert_{\kappa} \varepsilon_0$ and we notice that from the definition of $\varepsilon_0$ in Eq.~\eqref{eq:EpsZero} one has $Q<\frac{1}{2}$. 	To prove Item \emph{(i)} we start by noting that, estimate \eqref{eq:IterativoG} {gives}
	\begin{equation}\label{eq:Rudolfiana}
		\frac{4 e^{-\kappa_{n+1}} \Vert G^{(n)} \Vert_{\kappa_n}}{\delta} \overset{\text{\eqref{eq:IterativoG}, \eqref{eq:ETR-Polifemo}}}{\leq} \frac{32 \pi e^{-\frac{ \kappa}{2}}}	{\kappa} \Vert P \Vert_{\kappa} \varepsilon_0 e^{-n}=Q e^{-n} \, ,
	\end{equation}
	{where $\delta$ is the quantity defined in \eqref{eq:ETR-Polifemo}. 
	Equation \eqref{eq:Conjugation} therefore implies,} for any $A \in \OOO_{\kappa_{n}}$,
	\begin{equation}\label{eq:ETR-400}
		\Vert e^{-\ii G^{(n)}} A e^{\ii G^{(n)}} \Vert_{\kappa_{n+1}} \overset{\text{\eqref{eq:Conjugation}}}{\leq} \frac{1}{1-Q e^{-n}} \Vert A \Vert_{\kappa_n} \, .
	\end{equation}
	We thus argue iteratively: we define $Y_n:=e^{-\ii G^{(n-1)}} \cdots e^{-\ii G^{(0)}}$ and $Y_0:=\mathbbm{1}$ so that  $Y_{n_*}=Y$ and $Y_{n}=e^{-\ii G^{(n-1)}}Y_{n-1}$ for all $n=1,\dots,n_*$. We now show iteratively that 
	\begin{equation}\label{eq:Cisalpino}
		\Vert Y_n A Y_n^* \Vert_{\kappa_n} \leq \left(\prod_{j=0}^{n-1} \frac{1}{1-Q e^{-j}} \right) \Vert A \Vert_{\kappa} \, .
	\end{equation}
For $n=1$, this {result follows by recalling that $Y_1=e^{-\ii G^{(0)}}$ and} from \eqref{eq:ETR-400} with $G^{(n)}=G^{(0)}$. Assuming now that \eqref{eq:Cisalpino} holds for $n-1$, let us prove it for $n$. Indeed,
\begin{equation}
	\begin{split}
		\Vert Y_n A Y_n^* \Vert_{\kappa_n}&=\Vert e^{-\ii G^{(n-1)}} Y_{n-1} A Y_{n-1}^* e^{\ii G^{(n-1)}} \Vert_{\kappa_n} \overset{\text{\eqref{eq:ETR-400}}}{\leq} \frac{1}{1-Q e^{-(n-1)}} \Vert Y_{n-1} A Y_{n-1}^* \Vert_{\kappa_{n-1}} \\
		&\leq \frac{1}{1-Q e^{-(n-1)}}  \left( \prod_{j=0}^{n-2} \frac{1}{1-Qe^{-j}} \right) \Vert A \Vert_{\kappa}
	\end{split}
\end{equation}
which completes the iterative proof and shows that $\Vert Y A Y^*\Vert_{\frac{3 \kappa}{4}} \leq \left(\prod_{j=0}^{n_*-1} \frac{1}{1-Q e^{-j}} \right) \Vert A \Vert_{\kappa}$. {It remains to bound} $\prod_{j=0}^{n_*-1} \frac{1}{1-Q e^{-j}}$. {Taking logarithms on both sides, we find}
\[
	\ln\left(\prod_{j=0}^{n_*-1} \frac{1}{1-Q e^{-j}}\right) = -\sum_{j=0}^{n_*-1} \ln(1-Q e^{-j}) \, .
\]
{We use the convexity of the function $x \mapsto - \ln(1-Qx)$ on $[0,1]$:} $-\ln(1-Q x) \leq - \ln(1-Q) x$. Thus,
\[
	\ln\left(\prod_{j=0}^{n_*-1} \frac{1}{1-Q e^{-j}}\right) = -\sum_{j=0}^{n_*-1} \ln(1-Q e^{-j}) \leq (-\ln(1-Q)) \sum_{j=0}^{n_*-1} e^{-j} \leq (- \ln (1-Q)) \frac{e}{e-1} \, .
\]
We thus proved 
\begin{equation}\label{eq:ProductLogarithmPadovano}
	\prod_{j=0}^{n_*-1} \frac{1}{1-Q e^{-j}} \leq \left(\frac{1}{1-Q}\right)^{\frac{e}{e-1}} < 4
\end{equation}
where we used that $Q\leq \frac{1}{2}$ and $\frac{e}{e-1}<2$. This completes the proof of Item \emph{(i)}.

Concerning Item \emph{(ii)}, we use the same strategy but we need to start from $\kappa_{2n_*}$ because $ G^{(n_*-1)} $ is bounded only in $\kappa_{n_*}$ norm. We define $W_{n}:= e^{- \ii G^{(n_*-1)}} \cdots e^{- \ii G^{(n_*-n)}}$ so that $Y=W_{n_*}$ and $W_{n+1}=W_n e^{-\ii G^{(n_*-n-1)}}$. We prove iteratively that
\begin{equation}\label{eq:ETR-475}
	\Vert W_n^* A W_n \Vert_{\kappa_{n_{*}+n}} \leq \left(\prod_{j=n_*-1}^{n_*-n} \frac{1}{1-Qe^{-j}} \right) \Vert A \Vert_{\frac{3 \kappa}{4}} \, {.}
\end{equation}
For the case $n=1$, {we first estimate
\begin{equation}\label{eq:Aln663}
\frac{4 e^{-\kappa_{n_*+1}}\Vert G^{(n_*-1)}\Vert_{\kappa_{n_*}}}{\delta} \leq \frac{4 e^{-\frac{\kappa}{2}} \Vert G^{(n_*-1)} \Vert_{\kappa_{n_*-1}}}{\delta}\overset{\text{\eqref{eq:Rudolfiana}}}{\leq}Q e^{-(n_*-1)} < \frac{1}{2}
\end{equation}
and then}
\begin{equation*}
	\Vert W_1^* A W_1\Vert_{\kappa_{n_*+1}} = \Vert e^{\ii G^{(n_*-1)}} A e^{-\ii G^{(n_*-1)}} \Vert_{\kappa_{n_*+1}} \overset{\text{\eqref{eq:Conjugation}, \eqref{eq:Aln663}}}{\leq} \frac{1}{1-Q e^{-(n_*-1)}} \Vert A \Vert_{\frac{3 \kappa}{4}} \, .
\end{equation*}
 Now suppose that the iterative step holds up to $n$ and we prove it holds also for $n+1 {\leq} n_*$. {To perform the iteration, we first note that
 \[
 	\frac{4 e^{-\kappa_{n_*+n+1}}\Vert G^{(n_*-n-1)} \Vert_{\kappa_{n_*+n}}}{\delta} \overset{\text{\eqref{eq:Rudolfiana}}}{\leq} Q e^{-m} < \frac{1}{2}
 \]
 and therefore we can apply Lemma \ref{lem:Commutatori} to have}
\begin{equation*}
	\begin{split}
		\Vert W_{n+1}^* A W_{n+1} \Vert_{\kappa_{n_*+n+1}} &= \Vert e^{\ii G^{(n_*-n-1)}} W_n^* A W_n e^{-\ii G^{(n_*-n-1)}} \Vert_{\kappa_{n_*+n+1}} \\
		& \leq \frac{1}{1-Q e^{-(n_*-n-1)}} \Vert W_n^* A W_n \Vert_{\kappa_{n_*+n}} \\
		&\leq \frac{1}{1-Q e^{-(n_*-n-1)}}  \left(\prod_{j={n_*-1}}^{n_*-n} \frac{1}{1-Qe^{-j}}\right) \Vert A \Vert_{\frac{3 \kappa}{4}} \\
		&= \left( \prod_{j=n_*-1}^{n_*-(n+1)} \frac{1}{1-Q e^{-j}} \right) \Vert A \Vert_{\frac{3\kappa}{4}}
	\end{split}
\end{equation*}	 
which is the iterative step and implies immediately Item \emph{(ii)} by the use of \eqref{eq:ProductLogarithmPadovano}.	 

Concerning Item \emph{(iii)}, we define $W_0:=\mathbbm{1}$. Then,
\begin{equation*}
	\begin{split}
		\Vert Y^* A Y - A \Vert_{\frac{\kappa}{4}} & = \Big\Vert \sum_{n'=0}^{n_*-1} \left( W_{n'+1}^* A W_{n'+1}-W^*_{n'} A W_{n'}\right) \Big\Vert_{\frac{\kappa}{4}} \\
		&\leq \sum_{n'=0}^{n_*-1} \Vert W_{n'+1}^* A W_{n'+1}-W_{n'}^* A W_{n'} \Vert_{\frac{\kappa}{4}} \\
		&= \sum_{n'=0}^{n_*-1} \Vert e^{\ii G^{(n_*-n'-1)}} W_{n'}^* A W_{n'} e^{-\ii G^{(n_*-n'-1)}}-W_{n'}^* A W_{n'} \Vert_{\frac{\kappa}{4}} \\
		&\hspace{-2.5pt}\overset{\text{\eqref{eq:FirstOrderEstimate}}}{\leq} \sum_{n'=0}^{n_*-1} C_{\frac{1}{2}} e^{-\frac{\kappa}{4}} \frac{4}{\kappa} \Vert G^{(n_*-n'-1)} \Vert_{\frac{\kappa}{2}} \Vert W^*_{n'} A W_{n'} \Vert_{\frac{\kappa}{2}} \\
		& \hspace{-26pt}\overset{\text{\eqref{eq:IterativoG}, \eqref{eq:ETR-475}, \eqref{eq:ProductLogarithmPadovano}}}{\leq} 8 \pi C_{\frac{1}{2}} \frac{e^{-\frac{\kappa}{4}}}{\kappa} \varepsilon \sum_{n'=0}^{n_*-1} e^{-(n_*-n'-1)} \Vert P \Vert_{\kappa} \left(\frac{1}{1-Q}\right)^{\frac{e}{e-1}} \Vert A \Vert_{\frac{3 \kappa}{4}} \\
		& \leq 8 \pi C_{\frac{1}{2}} \frac{e}{e-1} \left(\frac{1}{1-Q} \right)^{\frac{e}{e-1}} \frac{e^{-\frac{\kappa}{4}}}{\kappa} \varepsilon \Vert P \Vert_{\kappa} \Vert A \Vert_{\frac{3 \kappa}{4}} \, .
	\end{split}
\end{equation*}
{ The third step uses $W_{n'}^* A W_{n'} \in \OOO_{\frac{\kappa}{2}}$, established in \eqref{eq:ETR-475}; the conclusion
then follows from \eqref{eq:ProductLogarithmPadovano}.}

Concerning Item \emph{(iv)}, {we start by noting that $Y$ is unitary in operator norm and thus $\Vert O-Y^*OY\Vert_{\mathrm{op}}=\Vert YOY^*-O\Vert_{\mathrm{op}}$. We now estimate the latter one. Indeed, since $O=O_S$,} 
\[
	\begin{split}
		\Vert Y O Y^*-O \Vert_{\mathrm{op}} & = \Vert W_{n_*} O_S W_{n_*}^*-O \Vert_{\mathrm{op}} \\
		&=\Big\Vert \sum_{n'=0}^{n_*-1} (W_{n'+1} O_S W_{n'+1}^*-W_{n'} O_S W_{n'}^*) \Big\Vert_{\mathrm{op}} \\
		&\leq \sum_{n'=0}^{n_*-1} \Vert W_{n'+1} O_S W_{n'+1}^* - W_{n'} O_S W_{n'}^* \Vert_{\mathrm{op}} \\
		&\leq \sum_{n'=0}^{n_*-1} \Vert W_{n'} e^{-\ii G^{(n_*-n'-1)}} O_S e^{\ii G^{(n_*-n'-1)}} W_{n'}^* - W_{n'}O_S W_{n'}^* \Vert_{\mathrm{op}} \\
		&=\sum_{n'=0}^{n_*-1} \Vert e^{-\ii G^{(n_*-n'-1)}} O_S e^{\ii G^{(n_*-n'-1)}} - O \Vert_{\mathrm{op}}=(\star) \, .
	\end{split}
\]
It remains to estimate each term of the sum,
\[
	\begin{split}
		\Vert e^{-\ii G^{(n_*-n'-1)}} O_S e^{\ii G^{(n_*-n'-1)}} - O \Vert_{\mathrm{op}} & \leq \int_0^1 \Vert e^{-\ii G^{(n_*-n'-1)} \tau} [G^{(n_*-n'-1)},O_S] e^{\ii G^{(n_*-n'-1)} \tau} \Vert_{\mathrm{op}} \, d \tau \\
		&=\Vert [G^{(n_*-n'-1)},O_S] \Vert_{\mathrm{op}} \, .
	\end{split}
\]
To estimate this last term, we exploit locality:
\[
	\begin{split}
		\Vert [G^{(n_*-n'-1)},O_S] \Vert_{\mathrm{op}} & = \Big\Vert \sum_{\substack{S' \in \mathcal{P}_c(\Lambda) \\ S' \cap S \neq \varnothing}} [G^{(n_*-n'-1)}_{S'},O_S] \Big\Vert_{\mathrm{op}} \leq \sum_{\substack{S' \in \mathcal{P}_c(\Lambda) \\ S' \cap S \neq \varnothing}} \Vert {[} G_{S'}^{(n_*-n'-1)},O_S] \Vert_{\mathrm{op}} \\
		& \leq \sum_{x \in S} \sum_{\substack{S' \in \mathcal{P}_c(\Lambda) \\ x \in S'}} 2 \Vert G^{(n_*-n'-1)}_{S'} \Vert_{\mathrm{op}} \Vert O_S \Vert_{\mathrm{op}} \leq |S| \sup_{x \in \Lambda} \sum_{\substack{S' \in \mathcal{P}_c(\Lambda) \\ x \in S'}} 2 \Vert G^{(n_*-n'-1)}_{S'} \Vert_{\mathrm{op}} \Vert O_S \Vert_{\mathrm{op}} \\
		&\leq 2 |S|\,  \Vert G^{(n_*-n'-1)} \Vert_{0} \Vert O \Vert_{\mathrm{op}} \, .
	\end{split}
\]
We thus have
\[
	\begin{split}
		(\star) &\leq 2 |S| \sum_{n'=0}^{n_*-1} \Vert G^{(n_*-n'-1)} \Vert_0 \Vert O \Vert_{\mathrm{op}} \overset{\text{
		\eqref{eq:IterativoG}}}{\leq} 4 \pi \varepsilon |S|\sum_{n'=0}^{n_*-1} e^{-(n_*-n'-1)} \Vert P \Vert_{\kappa} \Vert O \Vert_{\mathrm{op}} \\
		& \leq 4 \pi \varepsilon \frac{e}{e-1} |S| \Vert P \Vert_{\kappa} \Vert O \Vert_{\mathrm{op}} \, .
	\end{split}
\]
\end{proof}

{We require the following consequence of the Lieb–Robinson bounds; see
Lemma 2.15 of \cite{Gallone-Langella-Ck} or Lemma 6.2 in \cite{Gallone-Langella-2024}).}

\begin{lemma}\label{lem:LibRobinson}
	Let $O$ be a local observable supported on $S_O \in \mathcal{P}_c(\Lambda)$, {$A,B$ self-adjoint,} $A \in \OOO_\rho$ and $B \in \OOO_{2 \rho}$ for some $\rho>0$. Then there exists a positive constant $C_{LR}=C_{LR}(|S_O|,d,\rho)$ such that
	\begin{equation}
		\int_0^t \Vert [A,e^{\ii s B}Oe^{-\ii s B}] \Vert_{\mathrm{op}}  \, ds \leq C_{LR} \langle \Vert B \Vert_{2 \rho} \rangle^d \langle t \rangle^{d+1} \Vert O \Vert_{\mathrm{op}} \Vert A \Vert_{\rho} \, {.}
	\end{equation}
\end{lemma}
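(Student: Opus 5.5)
The plan is to derive the estimate from a standard Lieb--Robinson bound for the dynamics $\tau_s(O):=e^{\ii sB}Oe^{-\ii sB}$, followed by a light-cone splitting of the local terms of $A$ and an integration in $s$.

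\textbf{Step 1: Lieb--Robinson bound.} I will quote the Lieb--Robinson bound in the form of Nachtergaele--Sims/Hastings--Koma, with $F$-function $F(r)=e^{-\rho r}(1+r)^{-(d+1)}$. The interaction norm of $B$ with respect to this $F$ is controlled by $\Vert B\Vert_{2\rho}$. The point of using $2\rho$ is that a connected $S$ containing $x,y$ satisfies $|S|\geq |x-y|$, so half of the weight $e^{2\rho|S|}$ pays for the decay $e^{-\rho|x-y|}$ and the other half remains available. This yields, for any $X\in\mathcal{P}_c(\Lambda)$,
\[
\Vert [A_X,\tau_s(O)]\Vert_{\mathrm{op}}\leq 2\Vert A_X\Vert_{\mathrm{op}}\Vert O\Vert_{\mathrm{op}}\min\Big\{1,\; C\,|S_O|\,|X|\,e^{-\rho\,(\mathrm{dist}(X,S_O)-v|s|)}\Big\},
\]
with velocity $v\leq c(d,\rho)\Vert B\Vert_{2\rho}$. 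I expect this to be the main obstacle: the constants must be tracked so that $v$ is \emph{linear} in $\Vert B\Vert_{2\rho}$ (hence the factor $\langle\Vert B\Vert_{2\rho}\rangle^d$ in the statement) and the decay rate depends only on $\rho$. I would import this from the cited references, Lemma 2.15 of \cite{Gallone-Langella-Ck} and Lemma 6.2 of \cite{Gallone-Langella-2024}, rather than redoing the iteration.

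\textbf{Step 2: light-cone splitting.} Write $[A,\tau_s(O)]=\sum_X[A_X,\tau_s(O)]$ and fix $R(s)=v|s|$. Split the sets $X$ into two groups.

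For $X$ meeting the enlarged region $S_O^{R}:=\{x:\mathrm{dist}(x,S_O)\leq R\}$, use the trivial bound and sum over sites $x\in S_O^R$. This gives
\[
\sum_{x\in S_O^R}\sum_{X\ni x}2\Vert A_X\Vert_{\mathrm{op}}\Vert O\Vert_{\mathrm{op}}\leq 2\,|S_O^R|\,\Vert A\Vert_0\Vert O\Vert_{\mathrm{op}}\leq C(|S_O|,d)\,\langle R\rangle^d\,\Vert A\Vert_\rho\Vert O\Vert_{\mathrm{op}}.
\]

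For the remaining $X$, group them by the site $x\in X$ closest to $S_O$, at distance $r>R$, and apply the Lieb--Robinson bound. The factor $|X|$ is absorbed by the weight, since $|X|\leq \rho^{-1}e^{\rho|X|}$, so these sets contribute at most
\[
\sum_{r>R} C\,r^{d-1}e^{-\rho(r-R)}\,\Vert A\Vert_\rho\Vert O\Vert_{\mathrm{op}}\leq C'\langle R\rangle^{d-1}\Vert A\Vert_\rho\Vert O\Vert_{\mathrm{op}}.
\]

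Combining the two groups gives the pointwise-in-$s$ estimate
\[
\Vert[A,\tau_s(O)]\Vert_{\mathrm{op}}\leq C\langle v|s|\rangle^d\Vert A\Vert_\rho\Vert O\Vert_{\mathrm{op}}.
\]

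\textbf{Step 3: time integration.} Integrating in $s$ and using $v\lesssim\Vert B\Vert_{2\rho}$ gives
\[
\int_0^t\langle v s\rangle^d\,ds\leq \langle v\rangle^d\int_0^t\langle s\rangle^d\,ds\leq \langle \Vert B\Vert_{2\rho}\rangle^d\langle t\rangle^{d+1}
\]
up to constants depending on $d$ and $\rho$. Collecting all constants into $C_{LR}(|S_O|,d,\rho)$ yields the claim.
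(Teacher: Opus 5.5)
Your argument is correct. The paper does not prove this lemma at all: it imports it as a black box from Lemma 2.15 of \cite{Gallone-Langella-Ck} or Lemma 6.2 of \cite{Gallone-Langella-2024}. What you wrote is therefore a self-contained derivation along the standard lines: a Lieb--Robinson bound with $F_\rho(r)=e^{-\rho r}(1+r)^{-(d+1)}$, a light-cone splitting of the local terms of $A$, and integration in $s$.

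One thing to fix is the sourcing in Step 1. The two cited lemmas are exactly the statement you are proving, since that is how the paper uses them. Importing your Step 1 from them would therefore be circular. Cite the Nachtergaele--Sims/Hastings--Koma bound directly instead.

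Once you do that, the step you flag as the main obstacle becomes automatic. That bound gives, for $X\cap S_O=\varnothing$, $\Vert[A_X,\tau_s(O)]\Vert_{\mathrm{op}}\le \frac{2}{C_F}\Vert A_X\Vert_{\mathrm{op}}\Vert O\Vert_{\mathrm{op}}\,\bigl(e^{2C_F\Vert B\Vert_F|s|}-1\bigr)\sum_{x\in X,\,y\in S_O}F_\rho(|x-y|)$. So the velocity $v=2C_F\Vert B\Vert_F/\rho$ is linear in $\Vert B\Vert_F$. Your own observation then gives $\Vert B\Vert_F\le C(d,\rho)\Vert B\Vert_{2\rho}$: use $|S|\ge|x-y|$ for connected $S\ni x,y$, together with $(1+|S|)^{d+1}\le C(d,\rho)e^{\rho|S|}$.

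The remaining steps check out:
\begin{itemize}
\item inside the light cone, the bound $|S_O^R|\le 2^d|S_O|\langle R\rangle^d$ with $\Vert A\Vert_0\le\Vert A\Vert_\rho$;
\item outside it, the shell count $\lesssim|S_O|\,r^{d-1}$ with $|X|\le\rho^{-1}e^{\rho|X|}$, which gives $\sum_{r>R}r^{d-1}e^{-\rho(r-R)}\lesssim\langle R\rangle^{d-1}$;
\item in the time integral, $\langle vs\rangle\le\langle v\rangle\langle s\rangle$ and $\int_0^{|t|}\langle s\rangle^d\,ds\le\langle t\rangle^{d+1}$ (read $\int_0^{|t|}$ for $t<0$).
\end{itemize}

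Compared with the paper's bare citation, your route makes explicit where the extra $\rho$ of locality of $B$ is spent: it pays for the decay rate of the Lieb--Robinson bound while keeping the velocity linear in $\Vert B\Vert_{2\rho}$. It also shows exactly how $C_{LR}$ depends on $(|S_O|,d,\rho)$.
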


{\begin{lemma}Let $A \in \OOO_0$. Then
\begin{equation}\label{eq:E447}
	\Vert A \Vert_{\mathrm{op}} \leq |\Lambda| \Vert A \Vert_0 \, .
\end{equation}
\end{lemma}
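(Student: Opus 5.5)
We write $A=\sum_{S\in\mathcal{P}_c(\Lambda)} A_S$ using the fixed local decomposition and apply the triangle inequality for the operator norm. This gives
\[
\Vert A\Vert_{\mathrm{op}}\le \sum_{S\in\mathcal{P}_c(\Lambda)}\Vert A_S\Vert_{\mathrm{op}} .
\]

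Each nonempty $S$ contains at least one site. We therefore bound the sum over sets by a double sum over sites and over sets containing that site:
\[
\sum_{S}\Vert A_S\Vert_{\mathrm{op}}\le \sum_{x\in\Lambda}\ \sum_{\substack{S\in\mathcal{P}_c(\Lambda)\\ x\in S}}\Vert A_S\Vert_{\mathrm{op}} .
\]
Every set $S$ appears here $|S|\ge1$ times with a nonnegative weight, so the inequality holds. The empty set is not a connected support. Even if it were allowed, $A_\varnothing$ would be a scalar multiple of the identity, and it is conventionally excluded.

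Next we bound the inner sum by its supremum over $x$. With $\kappa=0$ in \eqref{eq:ZioPinoloSecondo}, the weight $e^{0\cdot|S|}$ equals $1$, so the supremum is exactly $\Vert A\Vert_0$. Summing the constant bound over the $|\Lambda|$ sites gives $|\Lambda|\,\Vert A\Vert_0$, which is \eqref{eq:E447}.

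The argument has no real obstacle. The only point that needs care is the step from a sum over supports to a sum over sites, namely checking that each support is counted at least once.
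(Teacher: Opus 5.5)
Your proposal is correct and follows the paper's proof step for step: you apply the triangle inequality over the fixed local decomposition, overcount each support by summing over its sites, and bound each inner sum by the supremum that defines $\Vert A\Vert_0$. Your remark about the empty set is a harmless extra check that the paper leaves implicit.
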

\begin{proof}
	Using that $A=\sum_{S \in \mathcal{P}_c(\Lambda)} A_S$, we have
	\[
		\begin{split}
			\Vert A \Vert_{\mathrm{op}} &= \Big\Vert \sum_{S \in \mathcal{P}_c(\Lambda)} A_S \Big\Vert_{\mathrm{op}} \leq \sum_{S \in \mathcal{P}_c(\Lambda)} \Vert A_S \Vert_{\mathrm{op}} \leq \sum_{x \in \Lambda} \sum_{\substack{S \in \mathcal{P}_c(\Lambda) \\ x \in S}} \Vert A_S \Vert_{\mathrm{op}} \\
			&\leq |\Lambda| \sup_{x \in \Lambda} \sum_{\substack{S \in \mathcal{P}_c(\Lambda) \\ x \in S}} \Vert A_S \Vert_{\mathrm{op}}= |\Lambda| \Vert A \Vert_0 \, .
		\end{split}
	\]
	In the last step we used the definition of $\Vert \cdot \Vert_0$ norm \eqref{eq:ZioPinoloSecondo}.
\end{proof}
}

{We use the unitary operator $Y$ of Lemma \ref{lem:UnitaryConjugation} to define the \emph{dressed operators} $\mathcal{N}$ and $\mathcal{Z}$ appearing in Theorem \ref{thm:Main} as
\begin{equation}\label{eq:ETR-251}
	\mathcal{N}=Y^*NY \, , \qquad \mathcal{Z}=Y^*Z^{(n_*)}Y
\end{equation}
where $Z^{(n_*)}$ is defined in \eqref{eq:ETR-252}.}

\begin{proof}[Proof of Theorem \ref{thm:Main}] {Item (ii) of Lemma \ref{lem:UnitaryConjugation} implies that $\mathcal{N}=Y^*NY \in \OOO_{\frac{\kappa}{2}}$.}  The same holds for $\mathcal{Z}=Y^*Z^{(n_*)} Y$. {Since $Y$ is unitary on $\mathcal{H}_\Lambda$, unitary equivalence  gives $\sigma(\mathcal{N})=\sigma(Y^* N Y)=\sigma(N)$. This proves} Item \emph{(i)}. 

Concerning Item \emph{(ii)}, the stronger statement $\Vert \mathcal{N} - N \Vert_{{\frac{\kappa}{4}}} \leq C \varepsilon \Vert N \Vert_{\frac{3 \kappa}{4}}$ follows from Item \emph{(iii)} of Lemma \ref{lem:UnitaryConjugation}. Then,
\begin{equation*}
	\Vert \mathcal{N}-N \Vert_{\mathrm{op}} \overset{\text{\eqref{eq:E447}}}{\leq} |\Lambda| \Vert \mathcal{N}-N\Vert_{\frac{\kappa}{4}} \leq |\Lambda| C \varepsilon \Vert N \Vert_{\frac{3 \kappa}{4}}
\end{equation*}
with $C=32 \pi C_{\frac{1}{2}} \frac{e}{e-1}  \frac{e^{-\frac{\kappa}{4}}}{\kappa} \Vert P \Vert_{\kappa}$, {which proves Item (ii)} from which the thesis follows.

Then,
\begin{equation}
	\begin{split}
		\Vert e^{\ii H t} \mathcal{N} e^{-\ii H t} - \mathcal{N} \Vert_{\mathrm{op}} &= \Vert  e^{\ii H t} Y^* N Y e^{-\ii H t}  - Y^*N Y \Vert_{\mathrm{op}} \\
		&=  \Vert Y e^{\ii H t} Y^* N Y e^{-\ii H t} Y^*-N \Vert_{\mathrm{op}} = (\star) \, .
	\end{split}
\end{equation}
{By} Lemma \ref{lem:Iterative},  $e^{-\ii G^{(n)}} H^{(n)} e^{\ii G^{(n)}}=H^{(n+1)}$, {and therefore}
\begin{equation}
	\begin{split}
		(\star)&= \Vert e^{\ii H^{(n_*)} t} N e^{-\ii H^{(n_*)}{t}}-N \Vert_{\mathrm{op}}=\left\Vert \int_0^t e^{\ii H^{(n_*)} \tau}[H^{(n_*)},N] e^{-\ii H^{(n_*)} \tau} \,\ud \tau \right\Vert_{\mathrm{op}} \\
		&\leq |t| \Vert [P^{(n_*)},N] \Vert_{\mathrm{op}} \overset{\text{\eqref{eq:E447}}}{\leq} |t| |\Lambda| \Vert [P^{(n_*)},N] \Vert_{0} \overset{\text{\eqref{eq:ALe601}}}{\leq} |t| |\Lambda|\frac{1}{e} \frac{16}{3 \kappa} \Vert P^{(n_*)} \Vert_{\frac{3\kappa}{4}} \Vert N \Vert_{\frac{3 \kappa}{4}} \, .
	\end{split}
\end{equation}
{The last inequality follows from Lemma \ref{lem:D-345} with $j=1$. Combining
Lemma \ref{lem:Iterative} with the definition of $n_*$ in \eqref{eq:ETR-Polifemo} gives}
\begin{equation}
	\Vert e^{\ii H t} \mathcal{N} e^{-\ii H t} - \mathcal{N} \Vert_{\mathrm{op}} \leq {\frac{16}{3\kappa} \Vert P \Vert_{\kappa} \Vert N \Vert_{\frac{3 \kappa}{4}} |\Lambda| \varepsilon e^{-\frac{\varepsilon_0}{\varepsilon}} |t|}
\end{equation}
which proves the {first estimate in} Item \emph{(iii)}. 	{The argument for $\mathcal{Z}$ is analogous}:
	\[
		\begin{split}
		\Vert e^{\ii H t} \mathcal{Z} e^{-\ii H t}- \mathcal{Z} \Vert_{\mathrm{op}}&=\Vert Y e^{\ii H t} Y^* Z^{(n_*)} Y e^{-\ii H t} Y^*-Z^{(n_*)} \Vert_{\mathrm{op}}=\Vert e^{\ii H^{(n_*)} t} Z^{(n_*)} e^{-\ii H^{(n_*)} t} - Z^{(n_*)} \Vert_{\mathrm{op}} \\
		&\leq \int_0^t \Vert [Z^{(n_*)},H^{(n_*)}] \Vert_{\mathrm{op}} \, ds = |t|\Vert[Z^{(n_*)},P^{(n_*)}] \Vert_{\mathrm{op}} \\
		&\hspace{-5pt}\overset{\text{\eqref{eq:E447}}}{\leq} |t| |\Lambda| \Vert [Z^{(n_*)},P^{(n_*)}] \Vert_{{0}} \overset{\text{\eqref{eq:ALe601}}}{\leq} |t| |\Lambda| \frac{16}{3 e \kappa} \Vert P^{(n_*)}\Vert_{\frac{3 \kappa}{4}} \Vert Z^{(n_*)} \Vert_{\frac{3 \kappa}{4}} \\ &
		\overset{\text{\eqref{eq:IterativoPZ}}}{\leq} |t| |\Lambda|\frac{16}{3 e \kappa} \frac{e^2}{e-1} \varepsilon^2 e^{-\frac{\varepsilon_0}{\varepsilon}} {\Vert P \Vert_{\kappa}^2}
		\end{split}
	\]
where we used Lemma \ref{lem:D-345} and this completes the proof of Item \emph{(iii)}.

Item \emph{(iv)} follows from Item \emph{(i)} of Lemma \ref{lem:Iterative}, indeed
\[
	[\mathcal{N},\mathcal{Z}]=Y^* [N,Z^{(n_*)}] Y=0 \, .
\]

For Item \emph{(v)} we use {unitary invariance of the operator norm} to write
\[
	\begin{split}
		\Vert e^{\ii H t} &O e^{-\ii H t} - e^{\ii (\mathcal{N}+\mathcal{Z})t} O e^{-\ii (\mathcal{N}+\mathcal{Z})t} \Vert_{\mathrm{op}} \\
		&{= \Vert e^{\ii H t}(O-Y^*OY+Y^*OY)e^{-\ii H t}-Y^* e^{\ii(N+Z^{(n_*)})t} Y O Y^* e^{-\ii (N+Z^{(n_*)}) t} Y \Vert_{\mathrm{op}}} \\
		&{= \Vert e^{\ii H t}(O-Y^*OY+Y^*OY)e^{-\ii H t}-Y^* e^{\ii(N+Z^{(n_*)})t} (Y O Y^*-O+O) e^{-\ii (N+Z^{(n_*)}) t} Y\Vert_{\mathrm{op}}} \\
		&\leq 2 \Vert O- Y^* O Y \Vert_{\mathrm{op}}+\Vert e^{\ii H t} Y^* O Y e^{-\ii H t}-Y^* e^{\ii(N+Z^{(n_*)})t} O e^{-\ii (N+Z^{(n_*)})t}Y \Vert_{\mathrm{op}} \\
		&\leq 2 \Vert O-Y^*OY\Vert_{\mathrm{op}}+\Vert Y e^{\ii H t} Y^* O Y e^{-\ii H t}Y^*- e^{\ii(N+Z^{(n_*)})t} O e^{-\ii (N+Z^{(n_*)})t} \Vert_{\mathrm{op}}\\
		&\leq 2 \Vert O - Y^*OY \Vert_{\mathrm{op}} + \Vert e^{\ii H^{(n_*)}t} O e^{-\ii H^{(n_*)} t}-e^{\ii (N+Z^{(n_*)})t} O e^{-\ii (N+Z^{(n_*)})t} \Vert_{\mathrm{op}} \, .
	\end{split}
\]
Concerning the first item, we use Item \emph{(iv)} of Lemma \ref{lem:UnitaryConjugation} to have
\[
	2 \Vert O - Y^* O Y \Vert_{\mathrm{op}} \leq 8 \pi \varepsilon \frac{e}{e-1} |S| \Vert P \Vert_{\kappa} \Vert O \Vert_{\mathrm{op}} \, .
\]
Concerning the second term, {Duhamel's Formula and Lemma \ref{lem:LibRobinson} give} 
\[
	\begin{split}
		\Vert e^{\ii H^{(n_*)} t} O e^{-\ii H^{(n_*)} t} - e^{\ii (N+Z^{(n_*)}) t} O &e^{-\ii (N+Z^{(n_*)}) t} \Vert_{\mathrm{op}} =\\
		& = \Vert e^{-\ii H^{(n_*)}t} e^{\ii (N+Z^{(n_*)})t} O e^{-\ii (N+Z^{(n_*)})t}e^{\ii H^{(n_*)} t}-O \Vert_{\mathrm{op}} \\
		&\leq \int_0^t \Vert [P^{(n_*)},e^{\ii(N+Z^{(n_*)}) \tau} O e^{-\ii (N+Z^{(n_*)}) \tau} ]\Vert_{\mathrm{op}} \, d \tau \\ &\leq C_{LR} \langle \Vert N \Vert_{\frac{\kappa}{2}}+ \Vert Z^{(n_*)} \Vert_{\frac{\kappa}{2}} \rangle^d \langle t \rangle^{d{+1}} \Vert P^{(n_*)} \Vert_{\frac{\kappa}{4}} \Vert O \Vert_{\mathrm{op}} \\
		&\hspace{-4pt}\overset{\text{\eqref{eq:IterativoPZ}}}{\leq} C_{LR} \langle \Vert N \Vert_{\kappa} + \Vert Z^{(n_*)} \Vert_{\kappa_{n_*-1}} \rangle^d \langle t \rangle^{d+1} \varepsilon \, e \, e^{-\frac{\varepsilon_0}{\varepsilon}} \Vert P \Vert_{\kappa} \Vert O \Vert_{\mathrm{op}} \, .
	\end{split}
\]
Using the inequality $(1+|t|)^{d+1} \leq 2^d (1+|t|^{d+1})$,  $\varepsilon<\varepsilon_0$ and the estimate \eqref{eq:IterativoPZ} for $\Vert Z^{(n_*)} \Vert_{\kappa_{n_*-1}}$ we get the thesis.

\end{proof}

\bigskip

\footnotesize

\noindent
\textbf{Acknowledgments.} The author warmly thanks B.~Langella for many stimulating discussions related to the topic of this paper.

\medskip

\noindent
\textbf{Conflicts of Interest.}
The author declares no conflicts of interest.

\medskip

\noindent
\textbf{Data Availability Statement.} 
Data sharing is not applicable to this article as no data sets were generated or analyzed during this study.

\bigskip

\bibliographystyle{siam}
\bibliography{DraftBib.bib}

\end{document}